\documentclass[12pt]{amsart}
\usepackage{amssymb,amsmath,natbib,tikz}
\usepackage{comment}
\usepackage{pgfplots}
\usepackage{tikz-3dplot}

\usetikzlibrary{intersections,positioning}
\usetikzlibrary{decorations.pathreplacing,angles,quotes,calc}

\usepackage[left=1.6in,top=1.8in,right=1.6in,bottom=1.8in]{geometry}
\usepackage[onehalfspacing]{setspace}
\usepackage[foot]{amsaddr}

\def\la{\lambda}
\def\al{\alpha}

\def\g{\gamma}
\def\ta{\theta}
\def\spn{\mbox{span}}

\def\then{\Longrightarrow}

\def\Na{\mathbf{N}}
\def\Re{\mathbf{R}}

\newcommand{\df}[1]{\textit{\textbf{#1}}}

\newcommand{\norm}[1]{ \| #1 \| }

\theoremstyle{plain}
\newtheorem{theorem}{Theorem}%[section]
\newtheorem{lemma}[theorem]{Lemma}%[section]
%[section]
\newtheorem{proposition}[theorem]{Proposition}
\newtheorem{corollary}[theorem]{Corollary}
\newtheorem*{remark}{Remark}

\sloppy

\begin{document}
\title{Spherical preferences}

\author[Chambers]{Christopher P. Chambers}
\author[Echenique]{Federico Echenique}

\address[Chambers]{Department of Economics, Georgetown University}
\address[Echenique]{Division of the Humanities and Social Sciences,
  California Institute of Technology}
\email[A1,A2]{cc1950@georgetown.edu, fede@caltech.edu}

\thanks{Echenique thanks the National Science Foundation for its support
through the grants SES 1558757 and CNS 1518941.}

\begin{abstract}
We introduce and study the property of orthogonal independence, a restricted additivity axiom applying when alternatives are orthogonal. The axiom requires that the preference for one marginal change over another should be maintained after each marginal change has been shifted in a direction that is orthogonal to both. 

We show that continuous preferences satisfy orthogonal independence if and only if they are spherical: their indifference curves are spheres with the same center, with preference being ``monotone'' either away or towards the center. Spherical preferences include linear preferences  as a special (limiting) case. We discuss different applications to economic and political environments. Our result delivers Euclidean preferences in models of spatial voting, quadratic welfare aggregation in social choice, and expected utility in models of choice under uncertainty. %As an extension, we discuss an endogenous notion of orthogonality.
\end{abstract}

\maketitle

\section{Introduction}
We introduce and study the property of {\em orthogonal additivity,} or {\em orthogonal independence,} in choice theory, and find that it characterizes a class of preferences with spherical indifference curves. The property is simple to state. Imagine an agent choosing among consumption bundles: vectors in $\Re^n$. Such vectors can be interpreted in different ways to capture various economic environments.  Suppose an agent starts from an endowment, or status quo point, of $w$.  The agent is choosing to either shift her consumption from $w$ to $w+x$, or from $w$ to $w+y$. The axiom, which we term \textbf{Origin-independent orthogonal additivity (OIOI)}, says that \[ 
w+x\succeq w+y \text{ and } z\perp x,y \then w+(x+z)\succeq w+(y+z).
\] The notation $z\perp x,y$ means that direction $z$ is orthogonal to both $x$ and $y$. In a sense, it complements and substitutes $x$ and $y$ equally. The axiom says that the comparison of $x$ and $y$ should not be affected by the addition of the orthogonal direction~$z$.  The axiom is required to hold for every $w$, $x$, $y$, and $z$ satisfying the hypotheses.

Our main result is that OIOI has strong implications, though much weaker than the analogous unqualified version of independence would have. Together with continuity, OIOI implies \df{spherical preferences}: preferences with linear or spherical indifference curves.  If the preference has spherical indifference curves, each sphere must have the same center, and the preference must be monotone along any ray emanating from that center.  Examples of spherical preferences include perfect substitutes in consumption theory, expected utility in choice under uncertainty, and Euclidean preferences in voting theory.

We now outline several different economic environments where either OIOI has a natural meaning, or the spherical representation has particular interest.

\begin{itemize}
    \item Net trades. A consumer chooses among consumption bundles $x\in\Re^n$, which can be thought of as net trades as they involve negative quantities. Orthogonality has an intuitive geometric meaning.
    \item Spatial choice. A voter chooses among policy proposals. There are $n$ issues in question, and each policy proposal takes a stand on each issue, so that proposals can be represented as vectors in $\Re^n$. Spherical preferences are closely related to Euclidean preferences, which have received a lot of attention in the literature on voting \citep{downs1957economic,stokes1963spatial}.  In fact, Euclidean preferences are the special case of spherical preferences where there is an ``ideal point,'' and the individual is worse off the further away from the ideal point.
    \item Choice under uncertainty. An agent chooses among uncertain monetary payoffs (monetary acts). There are $n$ states of the world and each vector $x\in\Re^n$ represents a stage-contingent payoff. When $x$, $y$, and $z$ are non-negative, then $z\perp x$ and $z\perp y$ means that $z$ complements $x$ and $y$ in the same states. Thus $z$'s relation to the uncertainty inherent in $x$ is the same as its relation to the uncertainty inherent in $y$, and we may infer that $z$ is as good as hedge for $x$ as for $y$. 
    
    For choice under uncertainty it is natural to require a monotonicty axiom, in addition to continuity and OIOI. The objects of choice are monetary acts, so monotonicity is a natural property. Under these axioms we obtain (risk neutral) subjective expected utility.

    \item Social choice. Consider a society of $n$ agents, and interpret vectors in $\Re^n$ as reflecting the level of welfare of each individual agent. Linear preferences embody a form of utilitarianism \citep{harsanyi1955cardinal}. More general spherical preferences in this environment have been studied by several authors \citep{epstein1992quadratic}.
    \item Dispersion. Consider a finite set of states of the world, with a uniform probability measure over them.  The set of vectors which sum to zero are now mean-zero random variables---or monetary acts, and they form a well-defined finite-dimensional vector space.  Since all acts have mean zero, we can interpret a ranking as a measure or riskiness or dispersion.  Orthogonality now becomes the statement that two random variables are uncorrelated.  So the axiom then requires that the addition of an act which is uncorrelated with each of two additionally present acts will not reverse their ranking.  In this environment, our axiom becomes related to \citet{pomatto} and \citet{pomattotamuz2019}, except that we explore the stronger condition of zero correlation rather than statistical independence.  %We also work with a model including states of the world, whereas their work is on the space of distribution functions.
\end{itemize}

A key property here is that OIOI is a universal property:  it claims a relationship to hold for all collections satisfying certain hypotheses.  As such, and according to \cite{ces}, it is falsifiable.  On the other hand, the model described by the axiom is apparently existential, relying on the existence of a sphere's center, or a linear direction. These ideas are fleshed out in Section~\ref{sec:finitedate}.

An extension of our work  establishes how one might endogenize a notion of orthogonality.  Different notions of orthogonality may permit more general quadratic transformations.  For example, instead of $x\cdot x=0$, we could identify orthogonality with $x\cdot Ax = 0$ holds for some symmetric $A$.  In Section~\ref{sec:endogenous}, we do exactly this.  Observe that $A$ need not be positive semidefinite, so ``orthogonality'' could be of a hyperbolic form. Importantly, the notion of orthogonality is derived from a utility function, so that orthogonality is obtained as instances where a conditional additivity property, like OIOI, holds.

Finally, our results establish that the set of continuous preferences satisfying OIOI (with the topology of closed convergence) is homeomorphic to a sphere. See Section~\ref{sec:topologyOIOI}.

\subsection{Related literature}

Many authors have studied Euclidean preferences and quadratic utility. We give a very brief overview of the literature, but it is fair to say that our result is quite different from the existing work. \cite{bogomolnaia2007euclidean} consider a profile of preferences over a finite set of alternatives, and study numbers $n$ for which these can be embedded into $\Re^n$ so that preferences are Euclidean. \cite{eguia2011foundations,eguia2013spatial} also studies the embedding problem, and considers expected utility preferences where the von-Neumann Morgenstern function has the Euclidean form for the chosen embedding.  He also axiomatizes Euclidean, and other preferences, in an expected utility framework.  He obtains separability by positing an additivity axiom of \citet{fishburn1970utility} that is meaningful in the context of lotteries, but not in our context.  \cite{azrieli} considers Euclidean preferences when there is a valence dimension and considers families of voters indexed by their ideal point. \cite{knoblauch2010recognizing} and \cite{peters2017recognising} study the algorithmic problem of recognizing whether preferences are Euclidean. \cite{degan2009voters} looks at the empirical implications of Euclidean preferences for voter data. \cite{henry2013euclidean} follows up on the paper by Degan and Merlo by providing a formal statistical test, and an identification strategy for Euclidean preferences. 

General polynomial (expected) utility was studied by \cite{machina}, who connects an $m$-order polynomial to preferences that only care about the first $m$ moments of the relevant uncertain act. In the social choice context, quadratic utility was introduced in a generalization of Harsanyi's theory of utilitarian aggregation by \cite{epstein1992quadratic}, who consider a sort of betweenness axiom. %Finally, \cite{gershkov2018monotonic} also look at endogenous notions of orthogonality, but the point is to understand the relation with incentives, and their analysis is very different from ours. 

\section{Model and main result}\label{sec:modelandmain}

\subsection{Model and notation}

The objects of choice, or \df{alternatives} are vectors in $\Re^n$, where $n \geq 3$. The inner product between two vectors is denoted by $x\cdot y = \sum_{i=1}^n x_iy_i$. Two alternatives $x$ and $y$ are \df{orthogonal} (or \df{perpendicular}) if $x\cdot y = 0$. In this case we write $x\perp y$. The \df{norm} of a vector $x$ is defined as, and denoted by, $\norm{x} = \sqrt{x\cdot x}$.  Given two vectors, $x\in \Re^n$ and $y\in \Re^m$, the notation $(x,y)$ refers to that vector in $\Re^{n+m}$ whose first $n$ coordinates coincide with $x$, and whose last $m$ coincide with $y$.

Choice behavior is modeled through a binary relation $\succeq$ on $\Re^n$, which dictates choice among pairs of alternatives in $\Re^n$.

\subsection{Axioms and main result}

Suppose that $w\in\Re^n$ is given as a starting, or endowment, point, and consider two alternative marginal changes $x$ and $y$ from $w$. Ultimately, the choice is between $w+x$ and $w+y$. Suppose further that  $w+x$ is deemed at least as good as $w+y$; we ask what happens when an additional marginal change $z$, \emph{orthogonal to both $x$ and $y$} is additionally appended.  Our axiom requires that $w + x + z$ be at least as good as $w + y + z$. In other words, the ranking of the two marginal changes should not be affected when we shift those changes in an orthogonal direction.  Since it imposes additivity, our axiom is similar in spirit to the independence axiom of von Neumann-Morgenstern, but it restricts the set of marginal changes to be those qualified by orthogonality.  

We may envision applying our model to a political setting, where $\Re^n$ might represent what is usually called ``policy space.’’ The vectors $x\in\Re^n$ represent proposals, or positions, along $n$ ``issues.'' In such a setting, $w$ would represent a status quo policy, and preferences would then be over certain changes from that status quo. 

%Orthogonality 

\medskip

\textbf{Origin independent orthogonal independence (OIOI):}  For all $w,x,y,z\in \Re^n$, if $z \perp x$ and $z \perp y$, then $w+x \succeq w+y$ iff $w+x + z \succeq w+y+ z$.

\medskip

The other two axioms are standard.

\medskip

\textbf{Continuity:} For all $x\in \Re^n$, the sets $\{y\in\Re^n: y \succeq x\}$ and $\{y\in\Re^n: x \succeq y\}$ are closed.

\medskip

\textbf{Weak order:} $\succeq$ is complete and transitive.\footnote{Complete:  For every $x,y\in\Re^n$, $x\succeq y$ or $y\succeq x$.  Transitive:  For all $x,y,z\in\Re^n$, $x \succeq y$ and $y \succeq z$ implies $x \succeq z$.}

Our main theorem says that continuous weak orders satisfy OIOI if and only if they can be represented by one of three classes of utility functions.

\begin{theorem}\label{thm:main}Suppose that $n \geq 3$.  Then a preference $\succeq$ satisfies OIOI, continuity, and weak order if and only if one of the following is true:
\begin{enumerate}
\item\label{it:linear} There is $u\in \Re^n$ for which $x\succeq y$ iff $u \cdot x \geq u \cdot y$
\item\label{it:euclidean} There is $x^*\in \Re^n$ for which $x \succeq y $ iff $\|x- x^*\| \leq \|y - x^*\|$
\item\label{it:antieuclidean} There is $x^* \in \Re^n$ for which $ x\succeq y$ iff $\|x -x^*\| \geq \|y-x^*\|$.
\end{enumerate}
\end{theorem}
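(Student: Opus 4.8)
My plan is to replace the preference by a continuous utility, distill OIOI into a clean translation-invariance statement, and then extract from that a local (differential-geometric) condition forcing every indifference surface to be totally umbilic, hence a sphere or hyperplane. Since $\succeq$ is a continuous weak order on the connected separable space $\Re^n$, Debreu's theorem furnishes a continuous $u\colon\Re^n\to\Re$ with $x\succeq y\iff u(x)\ge u(y)$. The first step is to show OIOI is equivalent to the statement that for all $a,b,z$ with $z\perp(a-b)$ one has $a\succeq b\iff a+z\succeq b+z$: given such $a,b,z$, solve the single linear equation $z\cdot w=z\cdot a$ for $w$, which forces $z\perp(a-w)$ and $z\perp(b-w)$, so OIOI (with $x=a-w$, $y=b-w$) applies, and the converse is immediate. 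Writing $m=(a+b)/2$ and $d=(a-b)/2$, this says the sign of $u(m+d)-u(m-d)$ is unaffected when $m$ is translated orthogonally to $d$, hence depends on $m$ only through $m\cdot d$; equivalently, along any fixed direction the induced order on all parallel lines is the same.

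Next I would extract a pointwise condition on the gradient field $g=\nabla u$ (working first under a provisional smoothness assumption to locate the right geometry). If $a\sim b$ then to first order $g(b)\cdot(a-b)=0$, and indifference is preserved under $+z$ for $z\perp(a-b)$, so $g(b+z)\cdot(a-b)=0$ as well. Fixing $z$ and letting $a-b$ range over all tangent directions with $a-b\perp z$, i.e.\ over $\spn\{g(b),z\}^{\perp}$, yields $g(b+z)\in\spn\{g(b),z\}$. Linearizing in $z$ gives $H(x)v\in\spn\{g(x),v\}$ for every $v$, where $H$ is the Hessian. Passing to the quotient $\Re^n/\langle g\rangle$, which has dimension $n-1\ge 2$ precisely because $n\ge 3$, this makes every vector an eigenvector, so the induced map is scalar; lifting and using symmetry of $H$ (a rank-one $g\phi^{\top}$ is symmetric only if $\phi\parallel g$) forces $H=\lambda I+\mu\,g g^{\top}$. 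Restricted to the tangent space $g^{\perp}$ the rank-one term vanishes, so the shape operator of each level set is a scalar multiple of the identity: the indifference surfaces are totally umbilic.

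I would then invoke the classical fact that a connected totally umbilic hypersurface in $\Re^n$ with $n\ge 3$ is contained in a sphere or a hyperplane (this is exactly the place where $n\ge 3$ is indispensable — and why $n=2$ must be excluded). Continuity of $u$ and the fact that its level sets foliate $\Re^n$ then force a common center (nested concentric spheres) or a common normal direction (parallel hyperplanes), since adjacent leaves cannot be spheres with distinct centers nor spheres abutting a plane. Monotonicity of $u$ along the radial/normal direction finally selects the three cases: decreasing in $\norm{x-x^*}$ gives the Euclidean representation~\eqref{it:euclidean}, increasing gives~\eqref{it:antieuclidean}, and the flat (parallel-hyperplane) foliation gives the linear representation~\eqref{it:linear}, with the linear class appearing as the limit in which the common center escapes to infinity and the curvature vanishes.

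The main obstacle is that $u$ is only continuous, so the Hessian computation above is a heuristic that locates the answer rather than a proof. The real work is to run the argument synthetically. I expect the key lemma to be a reflection principle: from the translation form of the axiom one shows that for any indifferent pair $a\sim b$ and any $z\perp(a-b)$ one has $a+z\sim b+z=\rho(a+z)$, where $\rho$ is reflection across the perpendicular bisector hyperplane of $[a,b]$; the crux is to upgrade this (on the hyperplane $a+(a-b)^{\perp}$) to the statement that $\rho$ maps every indifference surface to itself, equivalently that the induced one-dimensional order along each line is symmetric about its extremum. Once that symmetry is in hand for every indifferent pair, the reflections across all such perpendicular bisectors generate enough rigid symmetry to force the indifference surfaces to be concentric spheres or parallel hyperplanes without any differentiability, and the umbilic computation serves only as the smooth shadow confirming the target. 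Establishing this symmetry is where $n\ge 3$ is genuinely used and where I anticipate the bulk of the difficulty.
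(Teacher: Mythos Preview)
Your route is genuinely different from the paper's, and it carries a gap you correctly identify but do not close.

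The paper never invokes differential geometry. After the same reformulation you give (that $d\perp(x-y)$ implies $x\succeq y\iff x+d\succeq y+d$), it observes that this is Debreu separability with respect to \emph{every} orthonormal basis of $\Re^n$. Debreu's additive-representation theorem---not merely his utility-existence theorem---then yields, for each orthonormal basis $\{f_1,\ldots,f_n\}$, an additive utility $u(x)=\sum_i u_{f_i}(\alpha_{f_i}(x))$. A rotation lemma shows these representations can be chosen compatibly across bases, and for $n=3$ this forces $u(x+y)=u(x)+u(y)$ whenever $x\perp y$: the orthogonal Cauchy equation. A theorem of Sundaresan then gives $u(x)=c\norm{x}^2+v\cdot x$ outright, with no regularity beyond continuity needed; the extension to $n>3$ proceeds by restricting to three-dimensional coordinate subspaces. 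A separate short argument handles the degenerate case in which some direction is inessential, forcing linearity.

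Your umbilic computation is a correct heuristic but not a proof: it presupposes $u\in C^2$, which the axioms do not supply, and a continuous monotone transform of $c\norm{x}^2+v\cdot x$ need not be even once differentiable. The proposed synthetic fix---upgrading the observation $a+z\sim\rho(a+z)$ to the statement that each perpendicular-bisector reflection preserves every level set---is precisely the hard step, and you have not carried it out. The paper's own intuition section develops a geometric argument in this spirit (obtaining linear indifference curves on each sphere about a given center), but explicitly flags that this is intuition only and that the actual proof relies on a completely different argument. What the paper's algebraic route buys is exactly the avoidance of smoothness: additive conjoint measurement plus the orthogonal Cauchy functional equation replace the shape-operator calculation and the umbilic classification.
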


\begin{remark}We may replace OIOI by an axiom requiring that for any $x,y\in \Re^n$ and any $d\perp (x-y)$, $x\succeq y$ iff $x+d \succeq y+d$.  In fact, this is the axiom we utilize in our proof.  Recall the notion of separability discussed in \citet{debreu}.  Debreu's notion requires that the preference between any two vectors in $\Re^n$ that have a common projection on some subset of coordinates, does not depend on what that common projection is.  The alternative axiom we propose is easily seen to be a strengthening of Debreu's separability in this sense.  Namely, it requires separability to hold independently of the choice of orthogonal basis with which we represent vectors in $\Re^n$.\end{remark}

Thus, OIOI essentially requires that a preference take one of these three forms. The first representation, in \eqref{it:linear} of Theorem~\ref{thm:main}, is a standard linear preference.  In fact, we have as a simple consequence of the theorem that:

\begin{corollary}\label{cor:monotonepref}
Suppose that $n \geq 3$ and that a preference $\succeq$ satisfies OIOI, continuity, and weak order, and that there is $z\in\Re^n$ such that for all $x\in \Re^n$, $x+z\succeq x$. Then there is $u\in \Re^n$ for which $x\succeq y$ iff $u \cdot x \geq u \cdot y$.
\end{corollary}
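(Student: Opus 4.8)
The plan is to invoke Theorem~\ref{thm:main} directly: since $\succeq$ satisfies OIOI, continuity, and weak order, it must be one of the three types listed there. The hypothesis supplies a direction $z$ along which every alternative strictly improves, i.e.\ $x+z\succeq x$ for all $x$; I want to show this monotonicity rules out the two spherical (Euclidean and anti-Euclidean) cases, leaving only the linear representation in~\eqref{it:linear}.

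First I would dispose of the Euclidean case, \eqref{it:euclidean}. Here $x\succeq y$ iff $\norm{x-x^*}\le\norm{y-x^*}$, so preference increases as one moves toward the center $x^*$. The obstruction to global monotonicity is simply that one cannot keep getting closer to $x^*$ by repeatedly adding a fixed nonzero $z$: starting from any $x$, the points $x+tz$ for $t\in\Re$ trace out a line, and $\norm{x+tz-x^*}$ is a strictly convex function of $t$ that tends to $+\infty$ as $t\to+\infty$. So for large enough $t$ we have $\norm{(x+z)+tz-x^*}>\norm{x+tz-x^*}$, which by \eqref{it:euclidean} gives $(x+tz)\succ (x+tz)+z$, contradicting the assumed $x'+z\succeq x'$ at $x'=x+tz$. (If $z=0$ the hypothesis is vacuous and any of the three forms could in principle hold, but then the linear representation with $u=0$ trivially applies, so I may assume $z\ne 0$.) The anti-Euclidean case~\eqref{it:antieuclidean} is symmetric: there preference increases away from $x^*$, and now moving along $-z$ eventually decreases $\norm{\cdot-x^*}$, so some $x'=x+tz$ with $t$ very negative violates $x'+z\succeq x'$ by the same convexity/unboundedness argument.

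Having eliminated \eqref{it:euclidean} and \eqref{it:antieuclidean}, the only remaining possibility is the linear form~\eqref{it:linear}, which is exactly the conclusion. The one point requiring a little care is that the hypothesis gives $x+z\succeq x$ (weak), not strict preference, so I should phrase the contradiction as: along the relevant ray the spherical representation forces $x'\succ x'+z$ (strict, since the norms are strictly ordered away from the turning point), which is incompatible with $x'+z\succeq x'$. I expect the \textbf{main obstacle} to be purely bookkeeping: verifying that for a fixed $z\ne 0$ and suitable sign of $t$ the strict inequality on norms holds at \emph{some} point along the line, which follows immediately from $\norm{x+tz-x^*}^2$ being a quadratic in $t$ with positive leading coefficient $\norm{z}^2>0$. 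No genuinely hard step arises; the corollary is a direct specialization of the main theorem once the two spherical branches are excluded.
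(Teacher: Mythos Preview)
Your approach is exactly what the paper intends: the corollary is presented immediately after Theorem~\ref{thm:main} as ``a simple consequence of the theorem,'' with no explicit proof given. Invoking the trichotomy and then ruling out the Euclidean and anti-Euclidean branches by exhibiting a point where the monotonicity hypothesis fails is the intended argument, and your quadratic-in-$t$ reasoning is correct (though more than necessary: for~\eqref{it:euclidean} take $x=x^*$, so $x^*+z\succeq x^*$ forces $\norm{z}\le 0$; for~\eqref{it:antieuclidean} take $x=x^*-z$, so $x^*\succeq x^*-z$ forces $0\ge\norm{z}$; either way $z=0$).

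One slip to flag: your parenthetical treatment of $z=0$ is wrong. You write that ``the linear representation with $u=0$ trivially applies,'' but $u=0$ represents only total indifference, not an arbitrary Euclidean or anti-Euclidean preference. Since $z=0$ satisfies the hypothesis for \emph{every} weak order by reflexivity, the corollary as literally stated would be false without an implicit $z\neq 0$; the correct move is simply to note this and work under that assumption, rather than claim the $z=0$ case is covered.
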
 In particular, if $\succeq$ satisfies a standard monotonicity axiom ($x\succeq y$ whenever $x\geq y$), then OIOI and continuity implies the existence of a linear representation.  Actually, the condition in Corollary~\ref{cor:monotonepref} can be significantly weakened.  It is enough to postulate that there is no point that is either a strict local maximum, or a strict local minimum.  Other sufficient additional conditions for linearity can similarly be based off of the non-compactness of weak lower and upper contour sets.

The second representation, statement~\eqref{it:euclidean} of Theorem~\ref{thm:main}, implies that preferences are \df{Euclidean:} there is an ideal point $x^*$, whereby preference is maximized.  All other points (consumption bundles, or acts) are compared with respect to the distance to the ideal point.  The further away is a point, the worse it is.  As discussed in the introduction, Euclidean preferences are heavily used in spatial models in political science, but they have applications elsewhere as well.  The next result says that if we add a property of ``strict convexity," then our axioms pin down Euclidean preferences.
\begin{corollary}\label{cor:stconvexity}
Suppose that $n \geq 3$ and that a preference $\succeq$ satisfies OIOI, continuity, and weak order, and that $x\succeq y$ and $x\neq y$ implies that $(1/2)x+(1/2)y\succ y$. Then there is $x^*\in \Re^n$ for which $x \succeq y $ iff $\|x- x^*\| \leq \|y - x^*\|$
\end{corollary}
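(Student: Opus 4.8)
The plan is to derive the conclusion directly from Theorem~\ref{thm:main}, using the strict-convexity hypothesis only to discard two of the three possibilities it offers. Since $\succeq$ satisfies OIOI, continuity, and weak order, the theorem guarantees that $\succeq$ admits one of the three representations \eqref{it:linear}, \eqref{it:euclidean}, or \eqref{it:antieuclidean}. It therefore suffices to show that the added hypothesis---that $x\succeq y$ and $x\neq y$ imply $(1/2)x+(1/2)y\succ y$---is incompatible with the linear form \eqref{it:linear} and with the anti-Euclidean form \eqref{it:antieuclidean}. Ruling those out leaves \eqref{it:euclidean}, which is exactly the desired conclusion.

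To eliminate the linear case, suppose $x\succeq y$ iff $u\cdot x\geq u\cdot y$. If $u=0$ the relation is universal indifference, so for any distinct $x,y$ we have $x\succeq y$ yet $(1/2)x+(1/2)y\sim y$, contradicting strict convexity. If $u\neq 0$, I would choose $x\neq y$ lying on a common level set $\{v:u\cdot v=c\}$; this set is a hyperplane of dimension $n-1\geq 2$, so distinct such points exist. Then $x\sim y$ and, since $u\cdot((1/2)x+(1/2)y)=u\cdot y$, the midpoint is again indifferent to $y$, once more violating the hypothesis.

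To eliminate the anti-Euclidean case, suppose $x\succeq y$ iff $\norm{x-x^*}\geq\norm{y-x^*}$. Fix any radius $r>0$ and pick distinct $x,y$ with $\norm{x-x^*}=\norm{y-x^*}=r$, which is possible because the sphere about $x^*$ is infinite when $n\geq 3$. Then $x\sim y$, so $x\succeq y$ and $x\neq y$; but writing $a=x-x^*$ and $b=y-x^*$, the parallelogram identity gives $\norm{(1/2)(a+b)}^2=(r^2+a\cdot b)/2$, and Cauchy--Schwarz together with $a\neq b$ yields $a\cdot b<r^2$, so the midpoint lies strictly inside the sphere. Under the anti-Euclidean ranking a strictly smaller distance is strictly worse, hence $(1/2)x+(1/2)y\prec y$, contradicting $(1/2)x+(1/2)y\succ y$.

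The substantive work is all carried by Theorem~\ref{thm:main}; the only real care needed here is the elementary bookkeeping of the degenerate subcases (in particular $u=0$) and the verification, via Cauchy--Schwarz, that midpoints of distinct equidistant points move strictly toward the center. I would finally remark---though it is not needed for the stated implication---that the Euclidean form \eqref{it:euclidean} does satisfy the strict-convexity hypothesis, since its upper contour sets are Euclidean balls, confirming that \eqref{it:euclidean} is genuinely the surviving case.
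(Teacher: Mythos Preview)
Your argument is correct and is exactly the approach the paper intends: apply Theorem~\ref{thm:main} and use the strict-convexity hypothesis to rule out cases \eqref{it:linear} and \eqref{it:antieuclidean}. The paper does not spell out a proof, remarking only that the corollary ``makes the obvious point that the only strictly convex spherical preference is Euclidean,'' and your elimination of the two unwanted cases fills in precisely the details that are being left implicit.
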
  The axiom of strict convexity is well known, and used in many areas of economics.

The last possibility in statement~\eqref{it:antieuclidean} is a kind of dual to the Euclidean idea.  Instead of an ideal point, there is a worst point $x^*$.  The further away from the worst point, the better.  This is a model which might explain ``NIMBY'' style-preferences.\footnote{NIMBY stands for ``not in my backyard.''}

With a slight abuse of terminology, we term these preferences \df{spherical} because they have spherical indifference curves, where we understand the linear preferences in \eqref{it:linear} as spherical because a line is like a limit of spheres with larger and larger radii.  Corollary~\ref{cor:monotonepref} says that linear preferences are the only spherical preferences that satisfy a basic monotonicity axiom.  Corollary~\ref{cor:stconvexity} makes the obvious point that the only strictly convex spherical preference is Euclidean.

%\begin{comment}
\subsection{A cardinal approach: endogenous orthogonality}\label{sec:endogenous}

One drawback of the previous approach is that it can be hard to ascribe meaning to the notion of orthogonal vectors. One may instead want orthogonality to be an endogenous condition that triggers additivity. Here we turn to a cardinal version of our exercise, where we start from a utility, or social welfare function, as the primitive object. This primitive is harder to justify and reason about than the ordinal approach in our main theorem, but it has the advantage  that we do not need an exogenous notion of orthogonality. Instead, orthogonality will be endogenous.

To fix ideas, consider a social choice framework. Consider a society of $n$ individuals that chooses among vectors that represent individual agents' welfare. Let $U:\Re^n\rightarrow \Re$ be a social welfare function. 

Our main axiom asks us to think of outcomes $w+x$ obtained by starting from a \df{status quo} $w$, and modifies it in the direction of $x$. Then we can use $U$ to evaluate a change in the direction of $x$, or a change in the opposite direction, $-x$. The axiom requires that this evaluation has to be the same regardless of the status quo.

\medskip

\textbf{Status quo independence:} 
\[   \frac{1}{2}\left[U(w+x) - U(w) \right] + \frac{1}{2}\left[U(w - x) -
  U(w) \right] 
\] is independent of (or constant in) $w$. 

\medskip

Status quo independence says that a lottery that ``shorts''  and ``longs'' $x$ with
equal probability has cardinal gain that is independent of the
status quo $w$.

\medskip

\textbf{Eventual linearity:} 
For any $x$ and $y$ there is $w$ such that 
\[
U(w+(x+y)) - U(w-(x+y)) = U(w+x) - U(w-x) + U(w+y) - U(w-y)
\]

\medskip

Think of eventual linearity as suggesting a notion of orthogonality, and using the idea behind OIOI. The difference $\delta_x^w = U(w+x)-U(w-x)$ can be interpreted as a ``marginal utility'' in the direction of $x$ from the status quo $w$. Now, given $x$ and $y$ we can always find $w$ such that $(x-w)\perp (y-x)$, and OIOI as an ordinal axiom imposes a form of additivity for orthogonal directions. A cardinal version of OIOI might require that,  relative to the status quo $w$, the marginal change in utility $\delta_{x+y}^w$ should equal the sum of $\delta_{x}^w$ and $\delta_{y}^w$. Eventual linearity imposes this idea, without insisting on the standard Euclidean notion of orthogonality.

\begin{theorem}\label{thm:cardinal} Let $U$ be continuous and satisfy $U(0)=0$. Then $U$
  satisfies status quo independence  and eventual linearity iff $U = f+g$, where $f$ is quadratic and g
  is linear. Moreover, $f$ and $g$ are uniquely identified from~$U$.
\end{theorem}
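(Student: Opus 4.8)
The plan is to reduce everything to the even--odd decomposition of $U$ and to show that status quo independence is doing essentially all of the work. Write $f(x)=\tfrac12[U(x)+U(-x)]$ and $g(x)=\tfrac12[U(x)-U(-x)]$, so that $f$ is even, $g$ is odd, both are continuous, $f(0)=g(0)=0$, and $U=f+g$. Because the even and odd parts of a function are uniquely determined, this split yields the ``moreover'' uniqueness claim for free once $f$ is known to be quadratic and $g$ linear. For the easy (``if'') direction, if $f(x)=x^\top A x$ with $A$ symmetric and $g(x)=u\cdot x$, then a one-line expansion gives $\tfrac12[U(w+x)+U(w-x)]-U(w)=x^\top A x$, which is constant in $w$, and $U(w+a)-U(w-a)=4\,w^\top A a+2\,u\cdot a$, which is additive in $a$ for every fixed $w$; hence both status quo independence and eventual linearity hold.

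For the ``only if'' direction I would substitute the decomposition into $P(w,x):=\tfrac12[U(w+x)+U(w-x)]-U(w)$ and run a parity argument in $w$. Replacing $w$ by $-w$ and using that $f$ is even and $g$ is odd shows that the $f$-contribution to $P$ is even in $w$ while the $g$-contribution is odd in $w$. Since status quo independence forces $P(w,x)=P(-w,x)$, the odd-in-$w$ part must vanish identically, giving Jensen's equation $g(w+x)+g(w-x)=2g(w)$; and the even-in-$w$ part, being constant in $w$ and equal to its value at $w=0$, gives the quadratic functional equation $f(w+x)+f(w-x)=2f(w)+2f(x)$.

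It remains to solve these two equations under continuity. Jensen's equation together with $g(0)=0$ collapses to Cauchy's equation $g(a+b)=g(a)+g(b)$, and a continuous additive map on $\Re^n$ is linear, so $g(x)=u\cdot x$. The quadratic functional equation forces $f$ to be homogeneous of degree two over the rationals, and the associated form $B(x,y):=\tfrac12[f(x+y)-f(x)-f(y)]$ is symmetric and biadditive with $B(x,x)=f(x)$; continuity then promotes $B$ from a rationally bilinear to a genuinely bilinear form, so $B(x,y)=x^\top A y$ for some symmetric $A$ and $f(x)=x^\top A x$ is quadratic.

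The main obstacle is precisely this last step: absent any regularity, the quadratic functional equation admits pathological Hamel-basis solutions, and one needs continuity (equivalently, local boundedness) to rule them out via the standard Jordan--von Neumann argument; the remaining steps are routine. I would also flag a conceptual point worth a remark: the parity argument shows that status quo independence, continuity, and $U(0)=0$ already deliver the representation, so in the ``only if'' direction eventual linearity is \emph{implied} rather than used. Its role is to supply the cardinal counterpart of OIOI and to render the statement a clean biconditional---consistent with the easy-direction computation, which shows eventual linearity holds (indeed for every $w$) for every quadratic-plus-linear $U$.
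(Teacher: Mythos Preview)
Your proof is correct, and it takes a genuinely different---and cleaner---route than the paper's. Both the paper and you define the same $f$ and $g$ (the even and odd parts of $U$, which coincide with the paper's $f$ once one evaluates its defining expression at $z=0$), and both end up solving the quadratic functional equation for $f$ and Cauchy's equation for $g$ via continuity. The difference lies in how the quadratic equation $f(w+x)+f(w-x)=2f(w)+2f(x)$ is obtained. The paper (its Lemma~\ref{lem:one}) derives this by a somewhat intricate chain of substitutions in the defining identity for $f$ and then invokes \emph{eventual linearity} to kill a residual term; your parity argument---comparing $P(w,x)$ with $P(-w,x)$---extracts both the Jensen equation for $g$ and the quadratic equation for $f$ directly from \emph{status quo independence} alone. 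Your remark that eventual linearity is therefore redundant in the ``only if'' direction is correct and is sharper than what the paper establishes; the paper's route, by contrast, makes the role of eventual linearity visible in the argument, but at the cost of more algebra and a strictly stronger hypothesis than is needed.
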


That $f$ is quadratic means that there is a symmetric and bilinear function $S$ such that $f(x)=S(x,x)$. Observe that, as a consequence, we obtain an endogenous notion of orthogonality. We say that $x$ and $z$ are $U$-orthogonal whenever $S(x,z)=S(z,x)=0$. As a special case we have the conventional definition of orthogonality used in OIOI, with $S(x,z) = x\cdot z$. 

This means that if $x$ and $z$ and $U$-orthogonal, then $U(x+z) = S(x+z,x+z)+g(x+z) = S(x,x+z)+S(z,x+z)+g(x) + g(z)=U(x)+U(y)$. Thus $U$ satisfies a \textit{conditional} linearity property, in the same spirit as OIOI. Linearity is conditional on $U$-orthogonality.  For any $x$ and $y$, if $z$ is $U$-orthogonal to both $x$ and $y$, then $U(x)- U(y) = U(x+z)- U(y+z)$.

\section{Finite data and testing}\label{sec:finitedate}

Here, we imagine we have two binary relations $R$ and $P$, each of which are \emph{finite}, in the sense that $|P|,|R|<+\infty$.  We ask when there is a preference $\succeq$ of the form described in Theorem~\ref{thm:main} for which 
\begin{enumerate}
\item If $x \mathbin{R}y$, then $x \succeq y$
\item If $x \mathbin{P}y$, then $x \succ y$.
\end{enumerate}

In case there is such a $\succeq$, we say that $(R,P)$ are \emph{rationalizable by a spherical preference}.  In the following, $\Delta(R\cup P) \equiv \{\lambda\in\Re^{R\cup P}_+:\sum_{(x,y)\in R\cup P}\lambda(x,y)=1\}$.

The following is a counterpart of \citet{ce}, Theorem 11.11.  Rationalizability by spherical preferences turns out to be characterized by the satisfaction of a collection of linear inequalities.  The ``unknowns’’ in the linear inequalities are unobservable parameters for a particular utility representation of spherical preferences, and each observation from $R \cup P$ corresponds to one linear inequality.  Observations in $R$ are weak, while observations in $P$ are strict.  The result is a standard application of duality techniques.  In the statement of the Proposition, the vector $\lambda$ is intended to be a kind of dual variable, or Lagrange multiplier.  

The importance of the following proposition is the following.  Suppose we seek to falsify the model of spherical preferences.  A naive method of so doing would be to check every single possible spherical preference, and verify that this preference is inconsistent with the observed data.  Such a method is theoretically impossible, as there are an infinite number of spherical preferences.  The following proposition establishes that it is enough to demonstrate the existence of a collection of dual variables satisfying some conditions in order to falsify the model.  

\begin{proposition}$(R,P)$ are rationalizable by a spherical preference if and only if for any $\lambda\in\Delta(R\cup P)$ for which $\sum_{(x,y)\in P}\lambda(x,y)>0$, one of the following is true
\begin{enumerate}
\item $\sum_{(x,y)\in R\cup P}\lambda(x,y)(x\cdot x) \neq \sum_{(x,y)\in R\cup P}\lambda(x,y)(y\cdot y)$
\item $\sum_{(x,y)\in R\cup P}\lambda(x,y)x \neq \sum_{(x,y)\in R\cup P}\lambda(x,y)y$.
\end{enumerate}
\end{proposition}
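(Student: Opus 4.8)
The plan is to reduce rationalizability to the solvability of a finite system of linear inequalities and then apply a theorem of the alternative. The first step is to observe that, by Theorem~\ref{thm:main}, every spherical preference admits a representation of the form $U(x) = a\,(x\cdot x) + u\cdot x$ for some $a\in\Re$ and $u\in\Re^n$: the linear case~\eqref{it:linear} corresponds to $a=0$; the Euclidean case~\eqref{it:euclidean} to $a<0$ (since $-\norm{x-x^*}^2 = -(x\cdot x)+2x^*\cdot x$ up to an additive constant, take $u=2x^*$); and the anti-Euclidean case~\eqref{it:antieuclidean} to $a>0$. Conversely, for every $(a,u)\in\Re^{1+n}$ the function $U(x)=a\,(x\cdot x)+u\cdot x$ represents one of the three forms, according to the sign of $a$. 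Hence $(R,P)$ is rationalizable by a spherical preference if and only if there exist $a\in\Re$ and $u\in\Re^n$ such that $a\,(x\cdot x - y\cdot y) + u\cdot(x-y)\geq 0$ for every $(x,y)\in R$, and the same expression is strictly positive for every $(x,y)\in P$.

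Next I would encode this as a system in the single unknown $c=(a,u)\in\Re^{1+n}$. For each pair set $v_{(x,y)} = \big(x\cdot x - y\cdot y,\; x-y\big)\in\Re^{1+n}$, so that the requirement reads $c\cdot v_{(x,y)}\geq 0$ for $(x,y)\in R$ and $c\cdot v_{(x,y)}>0$ for $(x,y)\in P$. Rationalizability is thus exactly the feasibility of this mixed weak/strict linear system. I would then invoke Motzkin's transposition theorem (equivalently, an LP-duality or separating-hyperplane argument): the system is infeasible if and only if there exist nonnegative weights $\{\lambda(x,y)\}_{(x,y)\in R\cup P}$, not all of the $P$-weights zero, with $\sum_{(x,y)\in R\cup P}\lambda(x,y)\,v_{(x,y)} = 0$.

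Finally I would normalize. Because the defining equation is homogeneous in $\lambda$, we may rescale so that $\lambda\in\Delta(R\cup P)$, and the condition that the $P$-weights are not all zero becomes $\sum_{(x,y)\in P}\lambda(x,y)>0$. Reading the vector equation $\sum\lambda(x,y)\,v_{(x,y)}=0$ coordinatewise splits it into its first coordinate, $\sum\lambda(x,y)(x\cdot x)=\sum\lambda(x,y)(y\cdot y)$, and its remaining $n$ coordinates, $\sum\lambda(x,y)\,x=\sum\lambda(x,y)\,y$. Thus infeasibility is equivalent to the existence of some $\lambda\in\Delta(R\cup P)$ with $\sum_{(x,y)\in P}\lambda(x,y)>0$ for which both equalities hold; negating, rationalizability holds if and only if every such $\lambda$ violates at least one of them, which is precisely conditions~(1) and~(2) of the Proposition. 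The case $P=\os$ is handled vacuously, the trivial preference $U\equiv 0$ always rationalizing $R$.

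The step I expect to require the most care is the reduction in the first paragraph: one must check both that each of the three forms in Theorem~\ref{thm:main} is captured by some $(a,u)$, and that, conversely, no sign constraint on $a$ is needed, so that feasibility genuinely ranges over all of $\Re^{1+n}$ with no hidden restriction. After that, the only delicate point is bookkeeping the strict versus weak inequalities when applying the theorem of the alternative, so that the ``not all $P$-weights zero'' clause lines up exactly with $\sum_{(x,y)\in P}\lambda(x,y)>0$.
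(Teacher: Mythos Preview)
Your proposal is correct and follows essentially the same approach as the paper: reduce rationalizability to the solvability of the linear system $c(x\cdot x - y\cdot y)+u\cdot(x-y)\geq 0$ for $(x,y)\in R$ and $>0$ for $(x,y)\in P$, then apply a theorem of the alternative. The paper simply cites Lemma~1.12 of \citet{ce} for the duality step where you invoke Motzkin's transposition theorem explicitly, but the argument is the same; your additional remarks on the $P=\os$ case and on verifying that $(a,u)$ ranges freely over $\Re^{1+n}$ are sound bookkeeping that the paper leaves implicit.
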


\begin{proof}Rationalizability by a spherical preference is equivalent to the existence of $c\in \Re$ and $u\in \Re^n$ for which:

\[x\mathbin{R}y \rightarrow c(x\cdot x-y\cdot y) + u\cdot (x-y) \geq 0\]
\[x\mathbin{P}y \rightarrow c(x\cdot x-y\cdot y) + u \cdot (x-y) > 0.\]

This is a finite system of linear inequalities, whose consistency is equivalent to the condition in the statement of the Proposition.  See \citet{ce}, Lemma 1.12.
\end{proof}

\section{On the topological structure of the set of OIOI preferences}\label{sec:topologyOIOI}

Consider the set of preferences axiomatized in Theorem~\ref{thm:main}.  We claim that for $\Re^n$, upon removing the preference that is total indifference, the set of such preferences becomes homeomorphic to $S^n \equiv \{y\in \Re^{n+1}:\|y\|=1\}$.  To do so, we discuss a particular topology, the topology of closed convergence.  This topology is defined on the set of all binary relations $\succeq$ which satisfy the following three properties:
\begin{enumerate}
\item $\{(x,y):x\succeq y\}$ is closed.
\item $\succeq$ is \emph{reflexive}; \emph{i.e.} for all $x$, $x\succeq x$.
\item $\succeq$ is \emph{negatively transitive}; \emph{i.e.} $x\succ y\succ z$ implies $x\succ z$.
\end{enumerate}
Call this set of binary relations $\mathcal{P}$.  

 The topology of closed convergence, $\tau_c$, is the smallest Hausdorff topology on $\mathcal{P}$ for which the set $\{(x,y,\succeq):x \succeq y\}\subseteq \Re^n\times\Re^n \times \mathcal{P}$ is closed.\footnote{Recall a topology is \emph{Hausdorff} if every pair of distinct points can be separated by disjoint open sets.}  For example, see \citet{hildenbrand}, Theorem 1 p. 96.

Let the set $\Pi$ denote the set of all preferences axiomatized in Theorem~\ref{thm:main}, endowed with the topology of closed convergence.  The preference $\mathcal{I}$ represents complete indifference.

An important consequence of the following is that the set of OIOI preferences forms a compact set.  \citet{CEL2019} establishes that compactness of a set of preferences is a sufficient condition for ``recovering’’ a preference with finite data.

\begin{theorem}$\Pi\setminus\{\mathcal{I}\}$ is homeomorphic to $S^n$.\end{theorem}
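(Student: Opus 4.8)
The plan is to build an explicit homeomorphism $\Phi\colon S^n\to\Pi\setminus\{\mathcal{I}\}$ and then invoke the compact-to-Hausdorff principle. First I would record the common representation of all the preferences in $\Pi$. By Theorem~\ref{thm:main}, completing the square, each such preference is represented by a utility $U_{(c,u)}(x)=c\,(x\cdot x)+u\cdot x$ with $(c,u)\in\Re\times\Re^n$: linear preferences are the case $c=0$, $u\neq 0$; Euclidean preferences with ideal point $x^*$ are $(c,u)=(-1,2x^*)$; and anti-Euclidean preferences are $(c,u)=(1,-2x^*)$. Total indifference $\mathcal{I}$ occurs exactly at $(c,u)=(0,0)$, and $U_{(c,u)}$, $U_{(tc,tu)}$ represent the same preference for every $t>0$. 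So each non-indifferent preference corresponds to a ray in $\Re^{n+1}\setminus\{0\}$, and I would define $\Phi$ by sending $(c,u)\in S^n$ to the preference represented by $U_{(c,u)}$. Surjectivity onto $\Pi\setminus\{\mathcal{I}\}$ is then immediate; for injectivity I would note that the three families are mutually exclusive (linear preferences have unbounded hyperplane indifference sets, the spherical ones have bounded indifference sets, and the two spherical families are separated by the direction of monotonicity), so the sign of $c$ is pinned down by the preference, and within each family the remaining parameter ($u$ up to positive scale, or the center $x^*$) is determined; hence $(c,u)$ is unique on $S^n$.

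Next I would set up continuity. Identifying $\succeq$ with its graph $G=\{(x,y):x\succeq y\}\subseteq\Re^n\times\Re^n$, convergence in $\tau_c$ is Kuratowski--Painlev\'e convergence of closed sets, and on the locally compact, second countable space $\Re^n\times\Re^n$ this topology is compact and metrizable (see \citet{hildenbrand}); thus $\Phi$ can be checked for continuity along sequences. Writing $g_{(c,u)}(x,y)=c(x\cdot x-y\cdot y)+u\cdot(x-y)$, so that $G_{(c,u)}=\{g_{(c,u)}\geq 0\}$, I would take $(c_k,u_k)\to(c,u)$ in $S^n$ and show $\operatorname{Ls}G_{(c_k,u_k)}\subseteq G_{(c,u)}\subseteq\operatorname{Li}G_{(c_k,u_k)}$. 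The outer inclusion is routine: since the coefficients converge, $g_{(c_k,u_k)}\to g_{(c,u)}$ uniformly on compacta, so any convergent $(x_j,y_j)\in G_{(c_{k_j},u_{k_j})}$ has limit satisfying $g_{(c,u)}\geq 0$.

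The inner inclusion is the main obstacle, and I would handle it by cases on the value of $g:=g_{(c,u)}$ at a point $p_0=(x_0,y_0)\in G_{(c,u)}$. If $g(p_0)>0$, then $g_{(c_k,u_k)}(p_0)>0$ for large $k$, so $p_0\in G_{(c_k,u_k)}$ eventually. If $g(p_0)=0$ and $\nabla g(p_0)\neq 0$, I would pick an ascent direction $d$ with $\langle\nabla g(p_0),d\rangle>0$; for small $t>0$ the point $p_0+td$ lies in $\{g>0\}$ and hence is eventually in $G_{(c_k,u_k)}$, and a diagonal choice $t_k\downarrow 0$ yields $p_k\in G_{(c_k,u_k)}$ with $p_k\to p_0$. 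The remaining case $\nabla g(p_0)=0$ forces $c\neq 0$ and $x_0=y_0=-u/(2c)$, so $p_0$ lies on the diagonal; but reflexivity gives $g_{(c_k,u_k)}(q,q)=0$ for every $q$ and $k$, so diagonal points belong to every $G_{(c_k,u_k)}$ and need no approximation. Together these give $\operatorname{Li}=\operatorname{Ls}=G_{(c,u)}$, establishing continuity of $\Phi$.

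Finally, $\Phi$ is a continuous bijection from the compact space $S^n$ onto $\Pi\setminus\{\mathcal{I}\}$, which is Hausdorff as a subspace of $(\mathcal{P},\tau_c)$, so $\Phi$ is automatically a homeomorphism and the theorem follows. The two points needing care are the mutual-exclusivity/uniqueness argument behind injectivity and the boundary approximation in the inner Kuratowski limit (where the vanishing-gradient diagonal point is rescued by reflexivity); the rest is bookkeeping.
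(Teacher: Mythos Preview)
Your proof is correct and follows the same overall architecture as the paper: parameterize $\Pi\setminus\{\mathcal{I}\}$ by $(c,u)\in S^n$ via the utility $U_{(c,u)}(x)=c(x\cdot x)+u\cdot x$, check that this is a bijection, establish continuity of the map $S^n\to\Pi\setminus\{\mathcal{I}\}$, and conclude via the compact-to-Hausdorff principle. The one substantive difference is in the continuity step. The paper dispatches it by citing a general result of \citet{bordersegal} (continuity of the map from utilities to preferences, using that each preference in $\Pi$ is locally strict), whereas you verify Kuratowski--Painlev\'e convergence of the graphs directly, handling the boundary $\{g=0\}$ by a gradient-direction perturbation and treating the degenerate critical point (which is forced onto the diagonal) via reflexivity. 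Your route is more self-contained and makes the role of reflexivity explicit; the paper's is shorter but relies on an external lemma. Either way, the argument goes through.
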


\begin{proof}Observe that each $\succeq\in\Pi\setminus\{\mathcal{I}\}$ has a unique representation via:
\[u_{\succeq}(x) = c(x\cdot x) + d\cdot x,\]  where $c\in \Re$, $d\in\Re^n$, and $(c,d)\in S^n$, and that this map is one-to-one.

Further observe that $S^n$ is compact, and that the topology of closed convergence is Hausdorff, compact, metrizable (Corollary 3.81 of \citet{aliprantis1999}).

Finally, we show that the map $\pi:S^n \rightarrow \Pi$ whereby $\pi(x,d)$ is the preference represented by $c(x\cdot x)+d\cdot x$ is continuous, and then apply Theorem 2.33 of \citet{aliprantis1999}.  Continuity of the map $\pi$ follows easily from Theorem 8 of \citet{bordersegal}, using the fact that each $\succeq\in\Pi$ is locally strict.
\end{proof}

\section{Intuition behind Theorem~\ref{thm:main}}\label{sec:intuition}

We give a simple geometric intuition behind our main theorem. Our goal is to illustrate how the main force of the axiom implies linear indifference curves {\em on spheres.} Specifically, for each $w$ there is a vector $p_w$ such that for any sphere $S$ centered at $w$, if $x,y\in S$, then $x\sim y$ if and only if $p_w\cdot x = p_w\cdot y$. This is not quite enough to prove the theorem, but it serves to illustrate some of the forces behind it.\footnote{We should emphasize that the actual proof of Theorem~\ref{thm:main} relies on a completely different argument.}

One piece of notation we shall use is that,  for $x,y\in\Re^n$, $l(x,y) = \{\la x + (1-\la) y:\la\in \Re\}$ denotes the line passing through $x$ and $y$.

We shall use a seemingly stronger property than OIOI, namely:

\medskip

\textbf{Strong origin independent orthogonal independence (SOIOI):}  For all $x,y,a,b,w\in\Re^n$, if $x \perp y$, $a \perp b$, $(w + x) \succeq (w + a)$ and $(w + y) \succeq (w + b)$, then $(w + x + y) \succeq (w + a + b)$, with strict preference if either of the antecedent rankings are strict.

\medskip

One implication of Theorem~\ref{thm:main} is that SOIOI is not actually stronger than OIOI, at least under the remaining axioms. For the purpose of the arguments developed in this section, we use SOIOI because it implies a kind of homotheticity:

\begin{proposition}\label{prop:homoth}If $\succeq$ satisfies weak order, continuity, SOIOA, and $n \geq 3$, then for any $w,x,y\in\Re^n$ for which $\|x\|=\|y\|$, and any $\beta > 0$, $w + x \succeq w +y$ iff $w + \beta x \succeq w + \beta y$.\end{proposition}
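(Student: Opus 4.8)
The plan is to prove the statement by reducing it to a one-directional scaling claim and then exploiting continuity. Write $u \succeq_w v$ as shorthand for $w+u \succeq w+v$; the axioms make $\succeq_w$ a continuous weak order, and SOIOI transfers verbatim to $\succeq_w$. I will establish the ``forward'' property (*): for every pair $u,v$ with $\norm{u}=\norm{v}$ and every real $\gamma>0$, $u\succeq_w v$ implies $\gamma u\succeq_w \gamma v$. The full biconditional for a given $\beta>0$ then follows immediately: the forward implication is (*) with $\gamma=\beta$, and the converse is (*) applied to the (equal-norm) pair $(\beta x,\beta y)$ with $\gamma=1/\beta$. Throughout we may assume $x\neq y$ and $\norm{x}=\norm{y}=r>0$, the remaining cases being trivial.

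First I record a one-sided shift lemma: if $u\succeq_w v$ and $z\perp u$, $z\perp v$, then $u+z\succeq_w v+z$, with strict preference preserved. This is SOIOI applied to the orthogonal pairs $(u,z)$ and $(v,z)$, whose second antecedent $z\succeq_w z$ holds by reflexivity. The heart of the argument is an additive-closure step. Call $\alpha>0$ \emph{admissible} (for the fixed pair $x,y$) if $x\succeq_w y$ implies $\alpha x\succeq_w \alpha y$ and $x\succ_w y$ implies $\alpha x\succ_w \alpha y$. I claim that if $\alpha$ and $\beta$ are admissible then so is $\alpha+\beta$. Since $n\geq 3$ and $\spn(x,y)$ has dimension at most $2$, there is a unit vector $e$ with $e\perp x$ and $e\perp y$; this is the only place dimension is used. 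Put $\tau=r\sqrt{\alpha\beta}$ and consider $\alpha x+\tau e$, $\beta x-\tau e$ on one side and $\alpha y+\tau e$, $\beta y-\tau e$ on the other. A direct computation using $e\perp x,y$ and $\norm{x}=\norm{y}=r$ gives $(\alpha x+\tau e)\perp(\beta x-\tau e)$ and $(\alpha y+\tau e)\perp(\beta y-\tau e)$, while the two shifts cancel in each sum. By the shift lemma, admissibility of $\alpha$ yields $\alpha x+\tau e\succeq_w \alpha y+\tau e$ and admissibility of $\beta$ yields $\beta x-\tau e\succeq_w \beta y-\tau e$; feeding these two orthogonal comparisons into SOIOI produces $(\alpha+\beta)x\succeq_w(\alpha+\beta)y$, with strictness propagated by SOIOI's strict clause.

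Starting from the trivially admissible $\alpha=1$, additive closure makes every positive integer admissible, in both the weak and the strict sense. Completeness upgrades this to a biconditional for integers: if $mx\succeq_w my$ but $x\not\succeq_w y$, then $y\succ_w x$, so strict admissibility gives $my\succ_w mx$, a contradiction. Applying the integer biconditional to the pair $(x/m,y/m)$ shows that $1/m$ preserves $\succeq_w$, and composing integer and reciprocal scalings makes every positive rational admissible. Finally, fix $u,v$ with $\norm{u}=\norm{v}$ and $u\succeq_w v$; the set $\{\gamma\geq 0: w+\gamma u\succeq w+\gamma v\}$ is closed by continuity of $\succeq$ and of $\gamma\mapsto(w+\gamma u,w+\gamma v)$, and it contains all positive rationals, which are dense, so it contains every $\gamma>0$. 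This is exactly (*), completing the proof.

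The main obstacle is the additive-closure step. A naive ``doubling'' argument, splitting $2x$ as $(x+e)+(x-e)$, only produces the scalings $2^k$, which are not dense; the real work is to find a construction that genuinely \emph{adds} admissible scalars and, crucially, uses only a single orthogonal direction $e$, so that it is valid already at $n=3$ (where $x^\perp\cap y^\perp$ can be one-dimensional). The choice of opposite shifts $\pm\tau e$ with $\tau=r\sqrt{\alpha\beta}$, which simultaneously makes both pairs orthogonal and cancels in the SOIOI conclusion, is the key device; carrying strict preference through it is what later enables the passage to reciprocals and hence to all positive rationals before continuity closes the gap.
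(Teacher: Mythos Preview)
Your proof is correct and follows the same overall route as the paper's: establish the integer case via SOIOI with an auxiliary orthogonal direction, pass to rationals by contrapositive, and close with continuity. The one notable difference is in the execution of the integer step: the paper inducts from $k$ to $k+1$ using the decomposition $(x-a)\perp(kx+ka)$ with origin $a$, which leaves a residual $ka$ that must be stripped off at each stage; your additive-closure lemma, using opposite shifts $\pm\tau e$ with $\tau=r\sqrt{\alpha\beta}$ so that they cancel in the SOIOI conclusion, is a tidier variant of the same idea and avoids that extra step.
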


The proof of Proposition~\ref{prop:homoth} is in Section~\ref{sec:proofprophomoth}.

\subsection{The case of $n=2$}

The first bit of intuition can be seen on the plane, that is with $n=2$.  The preference $\succeq$ is a continuous weak order, so it has a continuous utility representation $U$. Let $S$ be the sphere with center $0$ and radius $r>0$ on the plane. Write the sphere in polar coordinates, as \[ 
S=\{(\ta,r):0\leq \ta\leq 2\pi\}. 
\] We use addition mod $2\pi$ for angles.

First notice that there must exist two points $x=(\ta_x,r)$ and $y=(\ta_y,r)$ that are \df{antipodal} in the sense that $\ta_x=\ta_y+\pi$, and for which $U(x)=U(y)$. To see this suppose, if $U(0,r)=U(\pi,r)$, we are done so suppose (without loss of generality) that $U(0,r)>U(\pi,r)$ and consider the function \[g(t) = U(t,r) - U(t+\pi,r):[0,\pi]\rightarrow \Re.\] Then $g(0)>0>g(\pi)$. Hence, by the intermediate value theorem, there is $\ta\in[0,\pi]$ with $U(\ta,r)= U(\ta+\pi,r)$. 

Consider the following illustration, which is drawn with $w=0$ for simplicity:

\begin{tikzpicture}[scale=1]
%circle and origin at w
\draw (0,0) circle (3) ;
\coordinate (w) (0,0);
\draw[blue,thick] (w) -- (120:3.5);
\draw[blue,thick] (w) -- (300:3.5);

% w+x and w+y
\draw[blue,name path=lup] (w) -- (120:3) coordinate (x) ;
\draw[blue,name path=ldown] (w) -- (300:3) coordinate (y) ;

% w+x' and w+y'
\path (0,0) -- (70:3) coordinate (xp);
\path (0,0) -- (350:3) coordinate (yp);

% projections
\path[name path=lxp] (xp) -- +(210:3);
\path [name intersections={of=lup and lxp,by=xpproject}];
\path[name path=lyp] (yp) -- +(210:3);
\path [name intersections={of=ldown and lyp,by=ypproject}];

\draw[thin] (xp) -- (xpproject)  ++(-60:.2) -- ++(30:.2) -- ++(120:.2);
\draw (ypproject) -- (yp);
\draw[thin] (yp) -- (ypproject)  ++(-60:.2) -- ++(30:.2) -- ++(120:.2);

%\draw[decoration={brace,mirror,raise=2pt},decorate] (yp) -- (ypproject) ;
\draw[->] (ypproject) -- node[above] {$z$} (yp) ;
\draw[->] (xpproject) -- node[above] {$z$} (xp) ;

%labels and dots
\filldraw (w) circle (.03); \node [right] (w) {$0$};
\filldraw (x) circle (.03); \path (x) node[above] {$x$};
\filldraw (y) circle (.03); \path (y) node[below] {$y$};
\filldraw (xp) circle (.03); \path (xp) node[above]  {$x'$};
\filldraw (yp) circle (.03); \path (yp) node[right] {$y'$};

\end{tikzpicture}

%%% Local Variables:
%%% mode: latex
%%% TeX-master: t
%%% End:
\label{circle1}

We show that indifference curves on $w+S$ are linear. By the previous argument, there exist $x$ and $y$, antipodal points on $S$, with the property that $w+x\sim w+y$. Consider the points $x',y'$ that lie on a line parallel to $l(x,y)$. Then there is $z\perp l(x,y)$ for which $x'-z$ and  $y'-z$ are on $l(x,y)$. 

Now, since $(x’-z)\perp  z$ and $(y’-z)\perp z$, given that $\norm{x’}=\norm{y’}$, we have $\norm{x'-z} = \norm{y'-z}$. So there is $\beta\in \Re$ with $x'-z = \beta x$ and $y'-z = \beta y$. Hence Proposition~\ref{prop:homoth} implies that  $x'-z\sim y'-z$. Then by OIOI, $x'\sim y'$.

\subsection{$n\geq 3$}

Consider a sphere $S$ with center $w$ and radius $r$. Choose $x_1$ and $x_2$, orthogonal vectors with $\norm{x_1}=\norm{x_2}=r$. Consider the {\em equator} defined by $x_1$ and $x_2$ on $S$: the set of points on the linear span of $\{x_1,x_2\}$ that have norm $r$. By the argument for $n=2$ there exists a pair of antipodal vectors $x$ and $y$ on the equator such that $w+x\sim w+y$.

\begin{tikzpicture}
    % Define radius
    \def\r{3}

%origin
%\path (0,0) node[circle,fill,inner sep=1] (orig) {};
\path (0,0) coordinate (orig);

    % Sphere
    \draw (orig) circle (\r);
    \draw[dashed, name path=equator] (orig) ellipse (\r{} and \r/3);

    % Axes
    \draw[->] (orig) -- +(-\r/5,-\r/3) node[below] (x1) {$x_1$};
    \draw[->] (orig) -- +(\r,0) node[right] (x2) {$x_2$};

%antipodal points
\path[name path=tunel1] (orig) -- +(350:5);
\path[name path=tunel2] (orig) -- +(170:5);
\coordinate [name intersections={of=tunel1 and equator,by=punto1}];
\coordinate [name intersections={of=tunel2 and equator,by=punto2}];
\draw [blue,thick,text=black] (punto1) node[below] {$x$}-- (punto2) node[above] {$y$};

\draw [dashed] (orig) ellipse (\r/3 and \r{});

\filldraw (orig) circle (.05);\filldraw (punto1) circle (.05);\filldraw (punto2) circle (.05);

%antipodal points orthogonal to x-y
\path[name path=perptunel1] (orig) -- +(45:5);
\path[name path=perptunel2] (orig) -- +(225:5);
\coordinate [name intersections={of=perptunel1 and equator,by=asunto1}];
\coordinate [name intersections={of=perptunel2 and equator,by=asunto2}];
\draw [blue,thick,text=black] (asunto1) node[below] {$a$}-- (asunto2) node[above] {$b$};

\filldraw (orig) circle (.05);\filldraw (asunto1) circle (.05);\filldraw (asunto2) circle (.05);

\draw[blue] (orig) -- (90:\r) node[above] {$z$};
\filldraw (90:\r) circle (.05);

\draw[thin,red,<-] plot[smooth] coordinates {(55:1.7) (50:2) (55:3) (45:4)} node[right] {$E$};

%    %Angles
%    \pic [draw=gray,text=gray,->,"$\phi$"] {angle = x1--orig--phi};
%    \pic [draw=gray,text=gray,<-,"$\theta$"] {angle = a--orig--x3};

\end{tikzpicture}

%%% Local Variables:
%%% mode: latex
%%% TeX-master: t
%%% End:

Choose $a$ and $b$ on the equator such that $a$ and $b$ are antipodal, and perpendicular to $x$ and $y$. This is possible because the equator has dimension 2, and $x$ and $y$ are antipodal. Moreover, choose a vector $z$ that is orthogonal to the span of $\{x_1,x_2\}$. Consider the equator $E$ on $S$ defined by $a, b$ and $z$. 

On $E$ we must have, by the argument for $n=2$, two antipodal points $x'$ and $y'$ with $w+x'\sim w+y'$. Importantly,  $x'$ and $y'$ are perpendicular to  $x$ and $y$. Let $E'$ be the equator defined by $x$ and $x'$. $E'$ is two dimensional and generated by the orthogonal lines $l(x,y)$ and $l(x',y')$. 

The equator $E'$ is represented in the following figure. We shall prove that all the points on $E'$ are indifferent. 

\begin{tikzpicture}[scale=1]
%circle and origin at w
\draw (0,0) circle (3) ;
\coordinate (w) (0,0);
\draw[blue,thick] (w) -- (90:3.5);
\draw[blue,thick] (w) -- (270:3.5);

% w+x and w+y
\draw[thick,blue,name path=lup] (w) -- (90:3) coordinate (x) ;
\draw[thick,blue,name path=ldown] (w) -- (270:3) coordinate (y) ;

% w+x' and w+y'
\draw[thick,blue,name path=morena] (w) -- (0:3) coordinate (xp) ;
\draw[thick,blue,name path=manya] (w) -- (180:3) coordinate (yp) ;

% w+x'' and w+y'' and w+x'''
\path (0,0) -- (50:3) coordinate (xpp);
\path (0,0) -- (130:3) coordinate (xppp);
\path (0,0) -- (310:3) coordinate (ypp);
\path (0,0) -- (230:3) coordinate (yppp);

% projections
\path[name path=lxpp] (xpp) -- +(180:3);
\path [name intersections={of=lup and lxpp,by=xppproject}];
\path[name path=xppmorena] (xpp) -- +(-90:3);
\path [name intersections={of=morena and xppmorena,by=xppprojectdown}];

\path[name path=lxppp] (xppp) -- +(270:3);
\path [name intersections={of=manya and lxppp,by=xpppproject}];
\path[name path=cavani] (xppp) -- +(0:3);
\path [name intersections={of=lup and cavani,by=xpppprojectright}];

\path[name path=lypp] (ypp) -- +(180:3);
\path [name intersections={of=ldown and lypp,by=yppproject}];
\path[name path=torrreira] (ypp) -- +(90:3);
\path [name intersections={of=morena and torrreira,by=yppprojectup}];

\path[name path=godin] (yppp) -- +(0:3);
\path [name intersections={of=ldown and godin,by=ypppprojectright}];
\path[name path=gimenez] (yppp) -- +(90:3);
\path [name intersections={of=manya and gimenez,by=ypppprojectup}];

%\draw[decoration={brace,mirror,raise=2pt},decorate] (yp) -- (ypproject) ;

\draw[->] (xppproject) -- node[above] {$z$} (xpp) ;
\draw[->] (xpppproject) -- node[right] {$z'$} (xppp) ;
\draw[->] (xppprojectdown) -- node[left] {$z'$} (xpp) ;

\draw[->] (xpppprojectright) -- node[above] {$z$} (xppp) ;

\draw[->] (yppproject) -- node[above] {$z$} (ypp) ;
\draw[->] (yppprojectup) -- node[left] {$z'$} (ypp) ;

\draw[->] (ypppprojectright) -- node[above] {$z$} (yppp) ;
\draw[->] (ypppprojectup) -- node[right] {$z'$} (yppp) ;

%labels and dots
\filldraw (w) circle (.03); \node [right] (w) {$0$};
\filldraw (x) circle (.03); \path (x) node[right] {$x$};
\filldraw (y) circle (.03); \path (y) node[left] {$y$};
\filldraw (xp) circle (.03); \path (xp) node[right]  {$x'$};
\filldraw (yp) circle (.03); \path (yp) node[left] {$y'$};
\filldraw (xpp) circle (.03); \path (xpp) node[above]  {$x''$};
\filldraw (xppp) circle (.03); \path (xppp) node[above]  {$x'''$};
\filldraw (ypp) circle (.03); \path (ypp) node[right] {$y''$};
\filldraw (yppp) circle (.03); \path (yppp) node[below]  {$y'''$};
\end{tikzpicture}

So consider first $x''$ and $y''$ that lie on a line parallel to $l(x,y)$. By the argument for $n=2$, $w+x''\sim w+y''$. Reflect $x''$ and $y''$ across the $l(x,y)$ line and consider the points $x'''$ and
$y'''$ on $E$. Again we obtain that $w+x'''\sim w+y'''$.

Note now that $y''$ and $y'''$ are the reflection of (respectively)
$x''$ and $x'''$ across the $l(x',y')$ line. Then $w+x'\sim w+y'$
means that $w+x''\sim w+x'''$ and $w+y''\sim w+y'''$. Hence we obtain
that \[
w+ x''\sim w+y'' \sim w+y''' \sim w+x''' .\]

This implies that any point on $E'$ is indifferent to its antipodal point. To see this, consider $a$ on the following figure and let $a'$ be its antipodal point.

\begin{tikzpicture}[scale=1]
%circle and origin at w
\draw (0,0) circle (3) ;
\coordinate (w) (0,0);
\draw[blue,thick] (w) -- (90:3.5);
\draw[blue,thick] (w) -- (270:3.5);

% w+x and w+y
\draw[thick,blue,name path=lup] (w) -- (90:3) coordinate (x) ;
\draw[thick,blue,name path=ldown] (w) -- (270:3) coordinate (y) ;

% w+x' and w+y'
\draw[thick,blue,name path=morena] (w) -- (0:3) coordinate (xp) ;
\draw[thick,blue,name path=manya] (w) -- (180:3) coordinate (yp) ;

% w+x'' and w+y'' and w+x'''
\path (0,0) -- (75:3) coordinate (a); \path (0,0) -- (255:3) coordinate (antia);
\path (0,0) -- (105:3) coordinate (b); \path (0,0) -- (285:3) coordinate (antib);

\draw[green] (a) -- (antia) ;
\draw[green] (antib) -- (b);

\filldraw (a) circle (.03) node[above] {$a$}; \filldraw (antia) circle (.03) node[below] {$a'$};
\filldraw (b) circle (.03) node[above]  {$b$}; \filldraw (antib) circle (.03) node[below]  {$b'$};

\filldraw (x) circle (.03) node[above right] {$x$};
\filldraw (y) circle (.03) node[below left] {$y$};

\draw[thin,red,->] (75:2) arc (75:90:2); 
\path (82:2) node[above] {\footnotesize $\theta$} ;
\draw[thin,green,->] (105:2.5) arc (105:90:2.5);
\path (97:2.5) node[above] {\footnotesize $\theta$} ;

\end{tikzpicture}

Let $b$ be the reflection of $a$ across $l(x,y)$. By the previous
argument $w+a\sim w+b \sim w+a'$. So any point is indifferent to its
antipodal point.

Finally consider any two points on the same orthant of $E'$: Say $a$ and
$b$. Let $c$ be the vector $\frac{1}{2}(a+b)$, scaled to have norm $r$. Let $c'$ be the antipodal point to $c$
on $E$,  $d$ be perpendicular to $c$, and $d'$ be antipodal to $d$.

\begin{tikzpicture}[scale=1]
%circle and origin at w
\draw (0,0) circle (3) ;
\coordinate (w) (0,0);
\draw[blue,thick] (w) -- (90:3.5);
\draw[blue,thick] (w) -- (270:3.5);

% w+x and w+y
\draw[thick,blue,name path=lup] (w) -- (90:3) coordinate (x) ;
\draw[thick,blue,name path=ldown] (w) -- (270:3) coordinate (y) ;

% w+x' and w+y'
\draw[thick,blue,name path=morena] (w) -- (0:3) coordinate (xp) ;
\draw[thick,blue,name path=manya] (w) -- (180:3) coordinate (yp) ;

% w+x'' and w+y'' and w+x'''
\path (0,0) -- (20:3) coordinate (a);
\path (0,0) -- (45:3) coordinate (b);

%\path (0,0) -- (200:3) coordinate (antia);
%\path (0,0) -- (225:3) coordinate (antib);

\path (0,0) -- (32.5:3) coordinate (c);
\path (0,0) -- (122.5:3) coordinate (d);

\path (0,0) -- (212.5:3) coordinate (antic);
\path (0,0) -- (302.5:3) coordinate (antid);

%lines

\draw[red,thick,name path=perpcebolla] (antid) -- (d);

\draw[red,dotted,thick] (c) -- (d) -- (antic) -- (antid) -- cycle ;
%\draw[green,name path=punteado] (b) -- (antib);

\draw[green,->] (a) -- +(212.5:2.9);
\draw[green,->] (b) -- +(212.5:2.9);

%labels and dots

\filldraw (a) circle (.03) node[right] {$a$};
\filldraw (b) circle (.03) node[above]  {$b$};
\filldraw (c) circle (.03) node[above] {$c$};
\filldraw (d) circle (.03) node[left] {$d$};

\filldraw (antid) circle (.03) node[right] {$d'$};
\filldraw (antic) circle (.03) node[below] {$c'$};

\end{tikzpicture}

Then $w+d\sim w+d'$ as we have shown that antipodal points are
indifferent. This implies that $w+a\sim w+b$ by the same projection
argument as before.

Since $a$ and $b$ were arbitrary on the same orthant, we have that $w+a\sim w + b$ for all $a,b\in E'$.

The previous arguments establish the following.  In the proposition, $p$ is a vector which is orthogonal to $E’$, as derived in the preceding discussion.  Using the argument that indifference curves are linear on the circle, we get indifference for any pair of vectors on the sphere for which $p$ is normal to their difference, simply by taking the geodesic circle passing through the pair of vectors.

\begin{proposition} For each $w$ and $r$ there is $p\in\Re^n$ such that for $x,y\in S(w,r)$, $x\sim y$ iff $p\cdot x= p\cdot y$. 
\end{proposition}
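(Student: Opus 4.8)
The plan is to take $p$ to be a normal vector to the indifferent great circle $E'$ produced in the preceding construction, and then to transfer the ``indifference is linear'' conclusion from $E'$ to an arbitrary pair of points on $S(w,r)$ by sliding along geodesic circles. Concretely, let $\Pi$ be the two-dimensional plane through $w$ containing $E'$, and choose $p\neq 0$ orthogonal to the direction space of $\Pi$; note that then $p\cdot q = p\cdot w$ for every $q\in E'$, so the claim $p\cdot x = p\cdot y$ is exactly the statement that $x-y\perp p$. I would prove the two implications separately, with essentially all of the content in the direction $p\cdot x = p\cdot y \then x\sim y$.

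For that direction, given $x,y\in S(w,r)$ with $p\cdot x=p\cdot y$, form the geodesic circle $C = S(w,r)\cap\big(w+\spn\{x-w,y-w\}\big)$, a circle of radius $r$ centered at $w$ lying in a plane $\Pi_C$ through $w$. The restriction of $\succeq$ to $C$ is a continuous weak order on a circle, so the $n=2$ analysis applies: there is an in-plane functional $q_C$ with $s\sim t \iff q_C\cdot s = q_C\cdot t$ for $s,t\in C$. The key step is to calibrate $q_C$ against $p$. The circle $C$ meets the indifferent circle $E'$ in an antipodal pair $s_0, 2w-s_0$; these points are indifferent and antipodal on $C$, so $n=2$ linearity forces $2\,q_C\cdot(s_0-w)=0$, i.e. $q_C\perp(s_0-w)$. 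Since $s_0\in E'$ we also have $p\cdot(s_0-w)=0$, and inside the two-dimensional plane $\Pi_C$ the vector orthogonal to the nonzero vector $s_0-w$ is unique up to scale; hence $q_C$ is parallel to the projection of $p$ onto $\Pi_C$. As $x-w$ and $y-w$ lie in $\Pi_C$, this yields $q_C\cdot x = q_C\cdot y$ from $p\cdot x = p\cdot y$, and therefore $x\sim y$.

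For the converse, $x\sim y\then p\cdot x=p\cdot y$, I would argue contrapositively: if $p\cdot x\neq p\cdot y$ then $x$ and $y$ lie on distinct latitude sets, and non-degeneracy of $\succeq$ on $S(w,r)$ (in the degenerate case of total indifference one simply takes $p=0$) together with the strict part of SOIOI forces a strict comparison, so $x\not\sim y$. The step I expect to be the main obstacle is the calibration in the middle paragraph, and in particular guaranteeing that $C$ actually meets an indifferent equatorial set whose normal is the fixed vector $p$. For $n=3$ this is immediate, since $E'$ is the entire equator and any two great circles on $S^2$ intersect. For $n>3$ the equator is an $(n-2)$-sphere and a generic great circle $C$ need not meet the single circle $E'$; closing this gap requires first upgrading ``$E'$ is indifferent'' to ``the whole equatorial $(n-2)$-sphere is a single indifference class with common normal $p$,'' e.g.\ by varying the orthonormal frame $x_1,x_2$ used to build $E'$ and invoking connectedness of the latitude sets. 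This globalization of the normal direction is the delicate point of the argument.
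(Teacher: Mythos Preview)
Your approach is the paper's approach. The paper's entire argument for this proposition is the paragraph immediately preceding its statement: take $p$ orthogonal to the indifferent great circle $E'$, and for $x,y$ on the sphere with $p\perp(x-y)$, pass the geodesic circle through them and invoke the $n=2$ analysis. Your calibration step---intersecting $C$ with $E'$ to exhibit an antipodal indifferent pair $s_0,\,2w-s_0$ on $C$, and then arguing that $x-y$ is parallel to $s_0-w$ because both are orthogonal to (the projection of) $p$ inside the two-dimensional plane $\Pi_C$---makes explicit exactly what the paper compresses into the phrase ``simply by taking the geodesic circle passing through the pair of vectors.'' One small cleanup: you do not need the full ``iff'' form of the $n=2$ conclusion to run the calibration; it is enough that \emph{some} antipodal indifferent pair on $C$ forces indifference along every parallel chord, and $s_0,\,2w-s_0$ supplies that pair.

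You have also correctly located the weak point. For $n=3$ any two great circles on $S(w,r)$ meet, so $C\cap E'\neq\emptyset$ and the calibration goes through; for $n>3$ a generic geodesic circle need not meet the single circle $E'$, and moreover the orthogonal complement of the plane of $E'$ is $(n-2)$-dimensional, so ``$p\perp E'$'' does not even pin down $p$. The paper does not close this gap either: Section~\ref{sec:intuition} is explicitly labeled as intuition, and the authors note that the actual proof of Theorem~\ref{thm:main} proceeds by an entirely different route (additive conjoint representations plus the orthogonal Cauchy equation). Your proposed fix---first globalizing from ``$E'$ is indifferent'' to ``the full equatorial $(n-2)$-sphere $\{v\in S(w,r):p\cdot(v-w)=0\}$ is a single indifference class'' by varying the frame and using connectedness---is the right idea, and is precisely what the informal argument would need to be made rigorous in higher dimension. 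The converse direction (strictness across distinct latitudes) is likewise not treated in the paper's sketch; your contrapositive via nondegeneracy is in the same spirit and at the same level of informality.
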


Now it is easy to show 

 \begin{proposition} For each $w$ and $r$ there is $p\in\Re^n$ such that, for any $r'\leq r$ and $x,y\in S(w,r')$, $x\sim y$ iff $p\cdot x= p\cdot y$.
 \end{proposition}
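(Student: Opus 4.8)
The plan is to show that the single direction $p$ produced by the preceding proposition for the sphere $S(w,r)$ can be reused verbatim for every smaller sphere $S(w,r')$; the engine is the radial homotheticity of Proposition~\ref{prop:homoth}, which transports indifferences between spheres of different radii. Fix $w$ and $r$ and let $p$ be the vector supplied by the preceding proposition for $S(w,r)$. Given $0<r'\le r$, set $\beta=r'/r\in(0,1]$ and let $p'$ be the vector the preceding proposition supplies for $S(w,r')$. I will argue relative to displacement from $w$, using that $p\cdot x=p\cdot y$ is equivalent to $p\cdot(x-w)=p\cdot(y-w)$ (the term $p\cdot w$ cancels). The case $p=0$ is immediate: then $S(w,r)$ is an indifference set, homotheticity makes every $S(w,r')$ an indifference set as well, and $p=0$ represents all of them. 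So assume $p\ne 0$.

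First I would establish that the $p$-slices of $S(w,r)$ refine its $p'$-slices, i.e.\ that for $x,y\in S(w,r)$, $p\cdot x=p\cdot y$ implies $p'\cdot x=p'\cdot y$. Write $x=w+\tilde x$ and $y=w+\tilde y$ with $\norm{\tilde x}=\norm{\tilde y}=r$. If $p\cdot x=p\cdot y$, then $x\sim y$ by the preceding proposition. Since $\norm{\tilde x}=\norm{\tilde y}$, Proposition~\ref{prop:homoth} gives $w+\beta\tilde x\sim w+\beta\tilde y$, and these two points lie on $S(w,r')$. Applying the preceding proposition for $S(w,r')$ yields $p'\cdot(w+\beta\tilde x)=p'\cdot(w+\beta\tilde y)$; cancelling $p'\cdot w$, dividing by $\beta>0$, and restoring $p'\cdot w$ gives exactly $p'\cdot x=p'\cdot y$, as claimed.

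Next I would convert this refinement into parallelism of $p$ and $p'$. Choose $c$ strictly between the extreme values of $x\mapsto p\cdot x$ on $S(w,r)$, so the slice $\Sigma_c=\{x\in S(w,r):p\cdot x=c\}$ is a nondegenerate sphere and therefore affinely spans the hyperplane $H_c=\{x:p\cdot x=c\}$. By the previous step $p'$ is constant on $\Sigma_c$; being affine and constant on an affinely spanning subset of $H_c$, it is constant on all of $H_c$. A linear functional constant on $H_c$ must annihilate the direction space $p^\perp$ of $H_c$, whence $p'\in(p^\perp)^\perp=\spn\{p\}$, i.e.\ $p'=\la p$. Moreover $p\ne 0$ forces $p'\ne 0$: otherwise $S(w,r')$ would be an indifference set, and scaling up by $r/r'$ via Proposition~\ref{prop:homoth} would make $S(w,r)$ one too, contradicting $p\ne 0$. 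Hence $\la\ne 0$, so the condition $p'\cdot x=p'\cdot y$ coincides with $p\cdot x=p\cdot y$. Feeding this back into the preceding proposition for $S(w,r')$ shows that $p$ itself represents indifference on $S(w,r')$, and since $r'\le r$ was arbitrary the proposition follows.

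The main obstacle is the parallelism step: turning ``one family of parallel hyperplane-slices refines another'' into genuine proportionality of the normals. This hinges on each slice $\Sigma_c$ being rich enough to affinely span its hyperplane, which is exactly why the ambient dimension matters and why the argument lives under the standing hypothesis $n\ge3$ inherited from the preceding proposition. The remaining care is purely bookkeeping: handling the degenerate indifference cases $p=0$ and $p'=0$, both of which are dispatched by homotheticity.
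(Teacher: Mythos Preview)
Your argument is correct, but it takes a substantially longer road than the paper does. The paper's proof is a single chain of biconditionals: any point of $S(w,r')$ is $w+\beta\tilde x$ with $\norm{\tilde x}=r$ and $\beta=r'/r$, and then
\[
w+\beta\tilde x\sim w+\beta\tilde y
\iff w+\tilde x\sim w+\tilde y
\iff p\cdot\tilde x=p\cdot\tilde y
\iff p\cdot(\beta\tilde x)=p\cdot(\beta\tilde y),
\]
the first equivalence being exactly Proposition~\ref{prop:homoth} (which is an \emph{iff}), the second being the previous proposition at radius $r$. That is the whole proof; no second vector $p'$, no parallelism step, no affine-spanning argument, and no separate handling of degenerate cases.

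Your route---invoking the previous proposition a second time to produce $p'$ for $S(w,r')$, showing the $p$-slices refine the $p'$-slices, and then upgrading refinement to parallelism via the affine span of a generic slice---works, and it has the incidental virtue of making explicit that the direction $p$ is unique up to scale and independent of the radius. But the detour is unnecessary: you use only the forward direction of homotheticity in your Step~1, and then you rebuild the converse indirectly through geometry, when in fact Proposition~\ref{prop:homoth} already gives you the converse for free. Recognizing that would collapse your argument to the paper's one-liner.
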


\begin{proof} Let $p$ be as in the previous claim and 
  $r'\leq r$ and $\beta = r'/r$. Then $x,y\in S(w,r)$ iff $\beta
  x,\beta y\in S(w,r')$. Then
  $p\cdot x= p\cdot y$ iff 
  $w+x\sim w+y$ iff $w+\beta x\sim w+\beta y$ (an implication of Proposition~\ref{prop:homoth}).
\end{proof}

\section{Proof of Theorem~\ref{thm:main}}\label{sec:proofmain}
\subsection{Necessity}
We demonstrate that the three types of preferences satisfy OIOI. It is obvious that they are continuous weak orders.

So observe that any preference in the class has a representation as $u(x) = c x\cdot x + v\cdot x$, for some $c\in\Re$ and $v\in\Re^n$. Then  $(w + x) \succeq (w+y)$ implies that \[c (w + x)\cdot (w+x) + v\cdot (w+x) \geq c (w+y)\cdot (w+y)+ v \cdot (w + y).\] Add $c(w+z)\cdot (w+z)+v\cdot (w+z)$ to both sides to obtain that
\begin{align*}c (2w \cdot w + 2 w \cdot (x + z) + x\cdot x + z \cdot z) + v\cdot( 2w + x + z) \\
\geq c(2w\cdot w + 2w\cdot (y+z) + y\cdot y + z \cdot z) + v\cdot  (2w + y+z).\end{align*}
Subtract, from each side, $c w \cdot w + v \cdot w$, obtaining:
\begin{align*}c (w\cdot w + 2w\cdot (x+z) + x\cdot x + z \cdot z) + v \cdot (w + x + z) \\
\geq c( w\cdot w + 2w\cdot (y+z) +y \cdot y + z\cdot z)+ v\cdot (w+y+z).\end{align*}
Simplify and obtain:
\[c(w+x+z)\cdot (w+x+z) + v\cdot (w + x + z) \geq c (w+y+z) +v \cdot (w + y +z),\] using the fact that $x \perp z$ and $y \perp z$ (hence $x\cdot x+z\cdot z = (x+z)\cdot(x+z)$ and $y\cdot y+z\cdot z = (y+z)\cdot(y+z)$).  Therefore, $w + x+z \succeq w+ y+z.$

\subsection{Sufficiency}
\begin{proposition}\label{prop:newperpd}For a weak order satisfying OIOI, if $d \perp (x-y)$, then $x \succeq y$ iff $x + d \succeq y + d$.\end{proposition}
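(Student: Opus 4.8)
The plan is to obtain this directly from OIOI by a single, carefully chosen instantiation of its quantified variables. The tension to resolve is that OIOI places its orthogonality hypothesis on the two \emph{marginal changes} measured from a common base point $w$ (it requires $z\perp x$ and $z\perp y$), whereas the statement to be proved places the orthogonality hypothesis on the \emph{difference} of the two alternatives ($d\perp(x-y)$). The whole content of the proof is the observation that these two formulations can be matched by placing the common origin at one of the alternatives.

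Concretely, I would dispose of the trivial case $d=0$ first, and then set $w=x$ and rewrite the two alternatives as marginal changes from $w$: namely $x = w + 0$ and $y = w + (y-x)$. The two marginal changes are thus $0$ and $y-x$, and the direction to be appended is $d$. I then verify the hypotheses of OIOI with this data: $d\perp 0$ holds automatically since $d\cdot 0 = 0$, and $d\perp(y-x)$ is exactly the standing assumption $d\perp(x-y)$. Applying OIOI verbatim yields
\[
w+0 \succeq w+(y-x) \quad\text{iff}\quad w+0+d \succeq w+(y-x)+d,
\]
which is precisely $x\succeq y$ iff $x+d\succeq y+d$, as desired.

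There is no real obstacle here; the only point worth isolating is the conceptual one just used, that orthogonality to the difference $x-y$ can be promoted to orthogonality to both marginal changes by anchoring the origin at $x$ (or symmetrically at $y$), which forces one marginal change to be the zero vector, trivially orthogonal to $d$, and the other to be $x-y$, orthogonal to $d$ by hypothesis. I would remark that this is exactly why the weaker-looking condition $d\perp(x-y)$ in the Remark is in fact a consequence of OIOI, and so may be used interchangeably with it in the sufficiency argument that follows.
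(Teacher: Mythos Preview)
Your proof is correct and is essentially the paper's own argument: set $w=x$, write $x=w+0$ and $y=w+(y-x)$, verify $d\perp 0$ and $d\perp(y-x)$, and invoke OIOI. The only cosmetic difference is that the paper treats the two directions of the biconditional separately (using $-d$ for the converse), whereas you use the ``iff'' form of OIOI in one stroke; also, your separate handling of $d=0$ is harmless but unnecessary, since the general argument covers it.
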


\begin{proof}Observe that $x + (0) \succeq x + (y-x)$.  Further, $d \perp (0)$ and $d \perp (y - x)$.  So by OIOI, $x + (0) + d \succeq x + (y-x) +d$, or $x + d\succeq y + d$.  Conversely, $-d \perp [(x+d)-(y+d)]$, so the result follows from the first step.  \end{proof}

Say a vector subspace $D$ of $\Re^n$ is \emph{inessential} if for any $d \in D$ and any $x \in \Re^n$, $x + d \sim x$.

\begin{proposition}\label{prop:newsubsp}If $\succeq$ satisfies OIOI, weak order, and continuity, then if it has a nontrivial inessential subspace, it is a linear preference.\end{proposition}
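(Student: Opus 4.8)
The plan is to use the inessential direction to collapse $\succeq$ onto the hyperplane $d^{\perp}$, where OIOI upgrades to \emph{unrestricted} translation invariance, and then to invoke the classical fact that a continuous, translation-invariant weak order is linear.

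First I would fix a nonzero $d$ in the given nontrivial inessential subspace $D$; normalizing, take $\norm{d}=1$. Since $D$ is a subspace, $td\in D$ for every $t\in\Re$, so $x+td\sim x$ for all $x,t$. Writing $W=d^{\perp}$ and letting $P$ be orthogonal projection onto $W$, every $p\in\Re^n$ has $p-Pp\in\spn\{d\}$, hence $p\sim Pp$. Thus $\succeq$ depends only on the projection onto $W$, and it is the pullback under $P$ of a well-defined relation $\succeq'$ on $W$ given by $a\succeq' b$ iff $a\succeq b$ for $a,b\in W$. This $\succeq'$ inherits weak order and continuity (its contour sets are the intersections with $W$ of the closed contour sets of $\succeq$).

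The crux is to show $\succeq'$ is translation invariant, i.e.\ $a\succeq' b$ iff $a+c\succeq' b+c$ for all $a,b,c\in W$, and this is exactly where the inessential direction does the work. I lift $a,b,c$ to $\Re^n$ as $X=a+\al d$, $Y=b+\beta d$, and $Z=c+d$, where $\al,\beta$ are free precisely because $d$-components are immaterial for $\succeq$. Using $a,b,c\perp d$ and $\norm{d}=1$, one computes $Z\cdot X=c\cdot a+\al$ and $Z\cdot Y=c\cdot b+\beta$, so the choices $\al=-\,c\cdot a$ and $\beta=-\,c\cdot b$ make $Z\perp X$ and $Z\perp Y$ \emph{simultaneously}. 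Applying OIOI with $w=0$ then gives $X\succeq Y$ iff $X+Z\succeq Y+Z$; since $PX=a$, $PY=b$, $P(X+Z)=a+c$, $P(Y+Z)=b+c$, this reads $a\succeq' b$ iff $a+c\succeq' b+c$. The point is that the one orthogonal direction $\spn\{d\}$ supplies enough slack---placed on the representatives of $a$ and $b$ rather than on $c$---to absorb both orthogonality constraints, thereby stripping the orthogonality qualifier off OIOI on $W$. I expect this middle step to be the only real obstacle; everything else is routine.

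Finally, with full translation invariance on $W$, the set $C=\{v\in W:v\succeq' 0\}$ is a closed (by continuity) convex cone with $C\cup(-C)=W$ (by completeness): closure under addition follows from translation invariance and transitivity, and closure under nonnegative scaling from translation invariance together with continuity. Passing to the quotient by the indifference subspace $C\cap(-C)$, a pointed closed convex cone whose union with its negative is the entire space must sit in a space of dimension at most one; this forces $\succeq'$ to be represented by a linear functional $u'$ on $W$, with total indifference as the degenerate case $u'=0$. Extending $u'$ to $u\in W\subseteq\Re^n$ (so $u\cdot d=0$) and using $u\cdot x=u'\cdot Px$ yields $x\succeq y$ iff $u\cdot x\geq u\cdot y$, so $\succeq$ is linear. (The last paragraph may alternatively be replaced by a citation to the standard characterization of continuous translation-invariant weak orders.)
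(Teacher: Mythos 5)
Your proof is correct, and it takes essentially the same route as the paper's own: quotient out the inessential direction, use it as free slack to upgrade the orthogonality-qualified axiom to unrestricted translation invariance on the complementary hyperplane, and then invoke the standard fact that a continuous, translation-invariant weak order is linear. The only cosmetic differences are that the paper runs the key step through its reformulated axiom (Proposition~\ref{prop:newperpd}, invariance under shifts orthogonal to the difference $x-y$), so it needs a single slack parameter placed on one lift, whereas you apply OIOI directly with two slack parameters, and that you spell out the ``standard'' final step which the paper leaves implicit.
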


\begin{proof}To ease notation, suppose that the nontrivial inessential subspace is the subspace spanned by $(0,\ldots,0,1)$.

By definition of inessential, there is a preference $\succeq^*$ on $\Re^{n-1}$ for which $(x,c) \succeq (y,d)$ iff $x \succeq^* y$.  We claim that $\succeq^*$ is a linear preference, that is, for any $x,y,z\in \Re^{n-1}$, we have $x \succeq^* y$ iff $x + z \succeq^* y + z$.

Thus, suppose that $x \succeq^* y$ and let $z\in\Re^{n-1}$ be arbitrary.  Then for any $a\in \Re$, $(x ,0) \succeq (y,a)$.  In particular, let $a = z \cdot (x-y)$.  Observe that $(z,1) \perp (y-x,a) = (y,a)-(x,0)$.  Consequently by Proposition~\ref{prop:newperpd} we have $(x+z,1) \succeq (y+z,a+1)$, establishing that $x + z \succeq^* y + z$.

The remainder is now standard.  \end{proof}

\begin{corollary}\label{cor:additive}Suppose that $\succeq$ satisfies OIOI and weak order.  Suppose $n \geq 3$ and let $\{f_1,\ldots,f_n\}$ be an orthonormal basis for $\Re^n$.  For any $a,b\in\Re^n$ and any subset $G\subseteq \{1,\ldots,n\}$, we have: \[\sum_{i\in G}a_i f_i + \sum_{i\notin G}a_i f_i \succeq \sum_{i\in G}b_i f_i + \sum_{i\notin G} a_i f_i\] iff \[\sum_{i\in G}a_i f_i + \sum_{i\notin G}b_i f_i \succeq \sum_{i\in G}b_i f_i + \sum_{i\notin G} b_i f_i.\] \end{corollary}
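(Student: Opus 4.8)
The plan is to reduce the statement directly to Proposition~\ref{prop:newperpd}, which is the only ingredient needed. First I would introduce notation for the two ``halves'' of each vector: write $u = \sum_{i\in G}a_i f_i$ and $u' = \sum_{i\in G}b_i f_i$ for the $G$-parts of $a$ and $b$, and $v = \sum_{i\notin G}a_i f_i$ and $v' = \sum_{i\notin G}b_i f_i$ for their complementary parts. In this notation the first displayed comparison reads $u+v \succeq u'+v$, and the second reads $u+v' \succeq u'+v'$. The two comparisons differ only in that the common term $v$ has been replaced throughout by $v'$; equivalently, the second is obtained from the first by adding $d := v'-v$ to both sides.

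The key observation, which I would verify next, is that $d$ is orthogonal to the difference of the two alternatives being compared. Taking $x = u+v$ and $y = u'+v$, we have $x-y = u-u' = \sum_{i\in G}(a_i-b_i)f_i$, which lies in the span of $\{f_i : i\in G\}$, while $d = v'-v = \sum_{i\notin G}(b_i-a_i)f_i$ lies in the span of $\{f_i : i\notin G\}$. Since $\{f_1,\ldots,f_n\}$ is orthonormal, every vector in the first span is orthogonal to every vector in the second, so $(x-y)\cdot d = 0$, i.e. $d \perp (x-y)$.

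With this in hand, Proposition~\ref{prop:newperpd} gives at once that $x \succeq y$ iff $x+d \succeq y+d$, that is, $u+v \succeq u'+v$ iff $u+v' \succeq u'+v'$, which is exactly the claimed equivalence. I expect no real obstacle here: the entire content is packaged into Proposition~\ref{prop:newperpd}, and the only thing to check is the orthogonality of the $G$-part difference against the $G^c$-part shift, which is forced by orthonormality of the basis. It is worth noting, as a sanity check, that neither continuity nor the restriction $n\geq 3$ is actually used in this argument (weak order and OIOI suffice), and that the degenerate cases $G=\{1,\ldots,n\}$ and $G=\os$ are handled trivially since then $d=0$ or $u=u'=0$.
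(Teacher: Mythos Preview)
Your argument is correct and matches the paper's proof essentially verbatim: the paper also applies Proposition~\ref{prop:newperpd} with $x = \sum_{i\in G}a_i f_i + \sum_{i\notin G}a_i f_i$, $y = \sum_{i\in G}b_i f_i + \sum_{i\notin G} a_i f_i$, and $d = \sum_{i\notin G}(b_i-a_i)f_i$. Your added remarks about orthonormality forcing $d\perp(x-y)$ and about the hypotheses $n\geq 3$ and continuity not being used here are accurate.
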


\begin{proof}Follows from Proposition~\ref{prop:newperpd} by taking $x = \sum_{i\in G}a_i f_i + \sum_{i\notin G}a_i f_i $, $y = \sum_{i\in G}b_i f_i + \sum_{i\notin G} a_i f_i$, and $d = \sum_{i\notin G}(b_i-a_i)f_i$.\end{proof}

For our final step we need some additional notational conventions. For any subspace $T$ of $\Re^n$ and any  $x\in \Re^n$, let $\alpha_T(x)$ be the orthogonal projection of $x$ onto $T$.  If $T = \mbox{span}\{f\}$ for some vector $f\in \Re^n$, we abuse notation and write $\alpha_f(x)$ as the norm of $\alpha_{\mbox{span}\{f\}}$.

\begin{remark}
The final steps establish the following.  Let $S^{n-1}$ denote the unit sphere.  There is a utility representation $u$ of $\succeq$, and 
for each $f\in S^{n-1}$ a function $u_{f}:\Re\rightarrow\Re$,  satisfying the following properties:

\begin{enumerate}
\item For any orthonormal basis $\{f_1,\ldots,f_n\}$, if $x = \sum_{i=1}^n \alpha_{f_i}f_i$, then $u(x) = \sum_{i=1}^n  u_{f_i}(\alpha_{f_i})$.
\item $u(0) = 0$.
\end{enumerate}
It is then easy to show that $u:\Re^n \rightarrow \Re$ satisfies the property that if $ x \perp y$, then $u(x+y) = u(x) + u(y)$.  
\end{remark}

\begin{remark}The proof proceeds in Lemma~\ref{lem:rotation} and Proposition~\ref{prop:orthoadd} by establishing the result for $n=3$.  Then Proposition~\ref{prop:finalstep} extends the result to all $n \geq 3$.\end{remark}

\begin{lemma}\label{lem:rotation} Suppose that $n=3$, and that $\succeq$ is a continuous weak order satisfying OIOI.  Suppose that $\succeq$ has no non-trivial inessential subspaces.
Let $\{f_1,f_2,f_3\}$ be an orthonormal basis for $\Re^3$ and suppose that $u(z)=\sum_{i=1}^3 u_{f_i}(\al_{f_i}(z))$ is an additive representation for $\succeq$ for which $u_{f_i}(0)=0$ for each $i=1,2,3$. 
If $\{e_2,e_3\}$ is any other orthonormal basis for $\spn(\{f_2,f_3\})$, then there is an additive representation for $\succeq$,
\[v(z)=v_{f_1}(\al_{f_1}(z))+v_{e_2}(\al_{e_2}(z))+v_{e_3}(\al_{e_3}(z)),\] such that \[v(z)=u(z)=u_{f_1}(\al_{f_1}(z))+ v_{e_2}(\al_{e_2}(z))+v_{e_3}(\al_{e_3}(z)),\] and $v_{e_2}(0) = v_{e_3}(0)=0$. 
\end{lemma}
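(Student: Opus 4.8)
The plan is to reduce the statement to the uniqueness of additive representations after a suitable regrouping of coordinates. First I would observe that $\{f_1,e_2,e_3\}$ is itself an orthonormal basis of $\Re^3$: the vectors $e_2,e_3$ are orthonormal and lie in $\spn\{f_2,f_3\}$, which is orthogonal to $f_1$. By Corollary~\ref{cor:additive}, $\succeq$ is coordinatewise separable in this basis, and since $\succeq$ has no non-trivial inessential subspace every coordinate axis is essential (an inessential axis would itself be a non-trivial inessential subspace). Together with continuity and weak order, Debreu's additive representation theorem (\citet{debreu}) then produces an additive representation $w(z)=w_{f_1}(\al_{f_1}(z))+w_{e_2}(\al_{e_2}(z))+w_{e_3}(\al_{e_3}(z))$ of $\succeq$ in the new basis.

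Next I would pass to the two-factor grouping $(\text{$f_1$-axis})\times P$, where $P=\spn\{f_2,f_3\}=\spn\{e_2,e_3\}$. Writing $\Phi(z)=u_{f_2}(\al_{f_2}(z))+u_{f_3}(\al_{f_3}(z))$ and $\Psi(z)=w_{e_2}(\al_{e_2}(z))+w_{e_3}(\al_{e_3}(z))$, both $\Phi$ and $\Psi$ depend on $z$ only through its orthogonal projection onto $P$. Hence $u=u_{f_1}\circ\al_{f_1}+\Phi$ and $w=w_{f_1}\circ\al_{f_1}+\Psi$ are two additive representations of the \emph{same} preference $\succeq$ relative to the same two-component structure, with both components essential (the $P$-component because $f_2$ is essential).

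The crux is now the uniqueness of the additive representation for this two-component structure. Invoking the uniqueness part of the additive representation theorem — equivalently, solving the Pexider equation forced by the ordinal equivalence and continuity of $u_{f_1}\circ\al_{f_1}+\Phi$ and $w_{f_1}\circ\al_{f_1}+\Psi$ — I would conclude that there are $A>0$ and constants $c_1,c_2$ with $w_{f_1}=Au_{f_1}+c_1$ on the $f_1$-axis and $\Psi=A\Phi+c_2$ on $P$. Consequently $w=Au+(c_1+c_2)$ on all of $\Re^3$, so $u=A^{-1}\bigl(w-(c_1+c_2)\bigr)$ is itself additive in the basis $\{f_1,e_2,e_3\}$. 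Finally I would normalize: setting $v:=u$, $v_{f_1}:=u_{f_1}$, and $v_{e_j}(t):=A^{-1}\bigl(w_{e_j}(t)-w_{e_j}(0)\bigr)$ for $j=2,3$ gives $v_{e_2}(0)=v_{e_3}(0)=0$, and distributing $A^{-1}$ together with $w_{f_1}=Au_{f_1}+c_1$ rewrites $u$ as $u_{f_1}\circ\al_{f_1}+v_{e_2}\circ\al_{e_2}+v_{e_3}\circ\al_{e_3}$ up to an additive constant $K$. Evaluating at $z=0$ and using $u(0)=0$ and $u_{f_i}(0)=0$ forces $K=0$, yielding exactly the claimed representation with $v(z)=u(z)$, $v_{f_1}=u_{f_1}$, and $v_{e_2}(0)=v_{e_3}(0)=0$.

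The main obstacle I anticipate is the two-component uniqueness step: with only two essential factors one cannot simply quote the clean $n\geq 3$ uniqueness statement, and one must instead check that the structure satisfies the Thomsen (double cancellation) condition — which is automatic here, being implied by the existence of the additive representations — so that uniqueness up to a common positive affine transformation, $\phi\mapsto A\phi+b$ with a single $A>0$, remains valid. Everything else (the orthonormality of $\{f_1,e_2,e_3\}$, the essentiality bookkeeping, and the constant-matching at $z=0$) is routine.
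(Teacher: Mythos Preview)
Your proposal is correct and follows essentially the same route as the paper: obtain an additive representation in the new basis $\{f_1,e_2,e_3\}$ via Corollary~\ref{cor:additive} and Debreu, regroup into the two-factor structure $\spn\{f_1\}\times T$ with $T=\spn\{f_2,f_3\}$, invoke uniqueness of additive representations (the paper cites \cite{FOM}, Theorem~2, for exactly this step), and then normalize the $e_2,e_3$ components to vanish at $0$. Your explicit remark that the two-factor case requires the Thomsen condition, automatically satisfied here because an additive representation already exists, is a nice clarification that the paper leaves implicit in its citation.
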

\begin{proof}
First observe that by Corollary~\ref{cor:additive} and \cite{debreu} (Theorem 3), since $\{f_1,e_2,e_3\}$ is an orthonormal basis, and there are no non-trivial inessential subspaces, $\succeq$ has an additive representation $v(z)=v_{f_1}(\alpha_{f_1}(z))+v_{e_2}(\alpha_{e_2}(z)) + v_{e_3}(\alpha_{e_3}(z))$.  We shall prove that we can choose this representation so that $v_{f_1}=u_{f_1}$,  $u=v$, $v_{e_2}(0)=0$, and $v_{e_3}(0)=0$.  

Define $T \equiv \spn (\{e_2,e_3\})=\spn(\{f_2,f_3\})$. We have two additive representations $u,v$ of $\succeq$:
\begin{align*}
u(z)=u(\alpha_{f_1}(z) + \al_T(z)) & = u_{f_1}(\alpha_{f_1}(z)) \\
& + 
[u_{f_2}(\alpha_{f_2}(\alpha_T(z))+u_{f_3}(\alpha_{f_3}(\alpha_T(z)))] \\  
v(z)=v(\alpha_{f_1}(z) + \al_T(z)) & = v_{f_1}(\alpha_{f_1}(z)) \\
& + [v_{e_2}(\alpha_{e_2}(\alpha_T(z))+v_{e_3}(\alpha_{e_3}(\alpha_T(z)))].
\end{align*}

The pair $(\spn(\{f_1\})\times T,\succeq)$ constitutes an additive conjoint measurement structure in the sense of \cite{FOM} (see Chapter 6.2.4).\footnote{See also \citet{fishburn1970utility}, Theorem~5.2 or~5.4} By their Theorem~2, there exists $\beta>0$, and $\g,\g'$ with $u_{f_1} = \beta v_{f_1} + \g$ and, and for every $x\in T$,  $u_{f_2}(\alpha_{f_2}(x))+u_{f_3}(\alpha_{f_3}(x)) = \beta(v_{e_2}(\alpha_{e_2}(x))+v_{e_3}(\alpha_{e_3}(x))) + \g'$. So define $v'_{f_1} = \beta v_{f_1} + \g$,  $v'_{e_2} = \beta v_{e_2} + \ta_2$ and $v'_{e_3} = \beta v_{e_3} +\ta_3$, where $\ta_2 =  -\beta v_{e_2}(0)$ and $\ta_3=-\beta v_{e_2}(0)$. Note that $0=(u_{f_2}+u_{f_3})(0) = \beta (v_{e_2}+v_{e_3})(0) + \g'$ implies that $\g'=\ta_2+\ta_3$.  Hence, we obtain that 
\begin{align*}
v'(z) & =  v'_{f_1}(\alpha_{f_1}(z))+v'_{e_2}(\alpha_{e_2}(z))+v'_{e_3}(\alpha_{e_3}(z)) \\
& = u_{f_1}(\alpha_{f_1}(z))+v'_{e_2}(\alpha_{e_2}(z))+v'_{e_3}(\alpha_{e_3}(z)) \\
& = u_{f_1}(\alpha_{f_1}(z))+u_{f_2}(\alpha_{f_2}(z))+u_{f_3}(\alpha_{f_3}(z)) 
= u(z).
\end{align*} while $v'_{f_1}(0) = v'_{e_2}(0) = v'_{e_3}(0)=0$. Thus $v'$ has the desired properties.
\end{proof}

\begin{proposition}\label{prop:orthoadd}Suppose that $n = 3$, that $\succeq$ is a continuous weak order satisfying OIOI, and has no non-trivial inessential subspaces.  Then there is a continuous utility representation $u:\Re^n\rightarrow \Re$ for which for any $x,y\in\Re^n$ with $x \perp y$, we have $u(x+y) = u(x) + u(y)$.\end{proposition}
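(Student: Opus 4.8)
The plan is to show that the additive representation $u$ furnished by Debreu's theorem is in fact additive with respect to \emph{every} orthonormal basis, with component functions depending only on the direction, and then to read off orthogonal additivity by choosing a basis adapted to the orthogonal pair $x,y$. Concretely, I would fix the additive representation $u(z)=\sum_{i=1}^3 u_{f_i}(\al_{f_i}(z))$ guaranteed by Corollary~\ref{cor:additive} together with \cite{debreu}; it is continuous because $\succeq$ is, and we may normalize so that $u_{f_i}(0)=0$. Evaluating at $z=tf_i$ then gives $u_{f_i}(t)=u(tf_i)$, so $u(0)=0$ and each component function is just the restriction of $u$ to its axis. I would record this as an invariant to propagate: \emph{$u$ is additive with respect to an orthonormal basis $\{g_1,g_2,g_3\}$, with component functions $t\mapsto u(tg_i)$, all vanishing at $0$.}

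The core step is to show this invariant is preserved under a planar rotation fixing one axis, which is precisely Lemma~\ref{lem:rotation}: if the invariant holds for $\{f_1,f_2,f_3\}$ and $\{e_2,e_3\}$ is another orthonormal basis of $\spn\{f_2,f_3\}$, the lemma produces an additive representation of the \emph{same} $u$ over $\{f_1,e_2,e_3\}$ with $v_{f_1}=u_{f_1}$ and $v_{e_2}(0)=v_{e_3}(0)=0$; evaluating at $te_j$ shows $v_{e_j}(t)=u(te_j)$, so the invariant holds again. By relabeling (the lemma is symmetric in which axis is singled out), the same is true when any one of the three axes is held fixed. Any target orthonormal basis is obtained from $\{f_1,f_2,f_3\}$ by a rotation in $\mathrm{SO}(3)$, after possibly flipping the sign of one target vector, which is harmless since $\al_{-g}(z)=-\al_g(z)$ and $u((-t)g)=u(t(-g))$. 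By the Euler-angle decomposition every such rotation factors into finitely many planar rotations, each fixing one axis of the current basis, so iterating Lemma~\ref{lem:rotation} along this factorization carries the invariant to the target basis.

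Finally I would establish orthogonal additivity. If $x=0$ or $y=0$ the claim is immediate from $u(0)=0$. Otherwise $x\perp y$ with both nonzero makes $x,y$ linearly independent, so I would choose an orthonormal basis $\{g_1,g_2,g_3\}$ with $g_1=x/\norm{x}$ and $g_2=y/\norm{y}$. By the previous step $u(z)=\sum_i u_{g_i}(\al_{g_i}(z))$ with $u_{g_i}(t)=u(tg_i)$ and $u_{g_i}(0)=0$. Since $x+y=\norm{x}g_1+\norm{y}g_2$ has vanishing $g_3$-coordinate, $u(x+y)=u_{g_1}(\norm{x})+u_{g_2}(\norm{y})=u(x)+u(y)$, as required.

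The main obstacle is the middle step: upgrading Lemma~\ref{lem:rotation}, which only rotates within a single coordinate plane, to an arbitrary change of orthonormal basis. The work lies in the Euler-angle bookkeeping and in verifying that the invariant—additivity with directional component functions vanishing at $0$—is exactly what Lemma~\ref{lem:rotation} consumes and returns, so that the iteration is well-posed at every stage.
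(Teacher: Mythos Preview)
Your proof is correct and follows essentially the same strategy as the paper's: iterate Lemma~\ref{lem:rotation} to carry the normalized additive representation of $u$ to an orthonormal basis aligned with $x/\norm{x}$ and $y/\norm{y}$, and then read off $u(x+y)=u(x)+u(y)$ directly. The only difference is presentational---you invoke the Euler-angle decomposition once and for all to reach an arbitrary target basis, whereas the paper unwinds the same sequence of planar rotations by a hands-on case analysis (Case~3 $\to$ Case~2 $\to$ Case~1), each reduction being exactly one application of Lemma~\ref{lem:rotation}.
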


\begin{proof}
We say that $x$ and $y$ are parallel, or collinear, if there is a scalar $\la\in \Re$ with $y=\la x$.

Let $\{f_1,f_2,f_3\}$ be a given orthonormal basis of $\Re^3$. By Corollary~\ref{cor:additive} and \citet{debreu} (Theorem 3), since there are no non-trivial inessential subspaces,  there exists a representation $u:\Re^n\rightarrow \Re$ of $\succeq$ for which $u(z) = \sum_{i=1}^3 u_{f_i}(\alpha_{f_i}(z))$.  Suppose without loss of generality that $u_{f_i}(0) = 0$, as additive representations are preserved by an additive translation of each component utility. 

Now, fix arbitrary $x,y\in \Re^n$ for which $x \perp y$.  If either $x$ or $y$ is $0$, then we know that $u(x+y)=u(x)+u(y)$ because $u(0)=0$. So lets suppose that $x,y\neq 0$.  Now, we have three possible cases to consider:
\begin{enumerate}
\item $x$ is parallel to some $f_i$ and $y$ is parallel to some $f_j$, 
\item Either $x$ or $y$ is parallel to some $f_i$, and the other one is not parallel to some $f_j$
\item Neither $x$ nor $y$ are parallel to any $f_i$.
\end{enumerate}

We shall prove that case 3 reduces to case 2, and that case 2 reduces to case 1. 

Let us first consider case $3$.  Note $\spn\{f_2,f_3\}\cap\spn \{x,f_1\}$ is nonempty, as $x$ is not collinear with $f_1$. So choose $e_2\in \spn\{f_2,f_3\}\cap\spn \{x,f_1\}$, scaled so that $\norm{e_2}=1$. Let $e_3\in \mbox{span}\{f_2,f_3\}$ then be a unit vector with $e_3\perp e_2$. Thus $x \in \mbox{span}\{f_1,e_2\}$ and  $\{f_1,e_2,e_3\}$ is an orthonormal basis of $\Re^3$.

By Lemma~\ref{lem:rotation}, there exists $v_{e_2}$ and $v_{e_3}$ such that 
\[u(z) = u_{f_1}(\alpha_{f_1}(z))+v_{e_2}(\alpha_{e_2}(z))+v_{e_3}(\alpha_{e_3}(z))\]
while $v_{e_2}(0)=0$ and $v_{e_3}(0)=0$.

Since $x\in \spn(\{f_1,e_2\})$ and $e_3\perp \spn(\{f_1,e_2\})$, there exists $e_1$ such that $\{e_1,\frac{1}{\norm{x}}x\}$ is an orthonormal basis for $\spn(\{f_1,e_2\})$ and $\{e_1,\frac{1}{\norm{x}}x,e_3\}$ is an orthonormal basis for $\Re^n$.

By Lemma~\ref{lem:rotation} again, there exists $u_{e_1}$ and $u_{\frac{x}{\norm{x}}}$ such that 
\[u(z) = u_{e_1}(\alpha_{e_1}(z)) + u_{\frac{x}{\|x\|}}(\alpha_{\frac{x}{\|x\|}}(z)) + v_{e_3}(\alpha_{e_3}(z)),\]
$u_{e_1}(0)=0$ and $u_{\frac{x}{\norm{x}}}(0)=0$. We are now in the situation of Case 2, as $x$ is parallel to $\frac{x}{\norm{x}}$.

So consider $x$ and $y$ in the configuration of Case 2. In particular suppose that $x$ is parallel to $f_2$. Since $x\perp y$, $y\in \spn(\{f_1,f_3\})$. So there exists a unit vector $w$ such that $\spn(\{y,w\}) = \spn(\{f_1,f_3\})$. By Lemma~\ref{lem:rotation} there exists $v_w$ and $v_{\frac{y}{\norm{y}}}$ such that 
\[u(z) = u_{\frac{x}{\|x\|}}(\alpha_{\frac{x}{\|x\|}}(z) )+ v_{\frac{y}{\|y\|}}(\alpha_{\frac{y}{\|y\|}}(z)) + v_w(\alpha_w (z)),\] with
$u_{\frac{x}{\|x\|}}(0) = v_{\frac{y}{\|y\|}}(0) = v_w(0) = 0$.  Thus, we are now in the situation of Case 1.

So consider $x$ and $y$ in the configuration of Case 1. In particular, suppose that $x$ is parallel to $f_1$ while $y$ is parallel to $f_2$. Recall that $x$ and $y$ are each nonzero. Observe that 
\begin{align*}
u(x+y) & = u_{\frac{x}{\|x\|}}(\|x\|) +u_{\frac{y}{\|y\|}}(\|y\|)+u_{f_3}(0) \\
& =  \left(u_{\frac{x}{\|x\|}}(\|x\|)+u_{\frac{y}{\|y\|}}(0) + u_{f_3}(0)\right) +  
\left(u_{\frac{x}{\|x\|}}(0)+u_{\frac{y}{\|y\|}}(\|y\|) + u_{f_3}(0)\right) \\
& = u(x) +u(y),
\end{align*}
where the penultimate equality follows from the fact that $0=u(0)=u_{\frac{x}{\|x\|}}(0)+u_{\frac{y}{\|y\|}}(0) + u_z(0)$.
\end{proof}

\begin{proposition}\label{prop:finalstep}Suppose that $n \geq 3$, and 
that $\succeq$ is a continuous weak order satisfying OIOI, and has no non-trivial inessential subspaces.  Then there exists $v\in\Re^n$ and a scalar $c$ such that $u(x) = c\norm{x}^2+v\cdot x$ is a utility representation of $\succeq$. \end{proposition}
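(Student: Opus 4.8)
The plan is to build a single continuous representation $u$ of $\succeq$ that is simultaneously additive with respect to \emph{every} orthonormal basis (with each component function normalized to vanish at $0$), to read off from this the orthogonal Cauchy identity $u(x+y)=u(x)+u(y)$ whenever $x\perp y$, and finally to solve that functional equation to obtain the form $c\norm{x}^2+v\cdot x$. If $\succeq$ happens to have a nontrivial inessential subspace we are already done by Proposition~\ref{prop:newsubsp}, since it is then linear (the case $c=0$); so I assume throughout that it has none.

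Stage~1 (a basis-independent representation). I would fix a reference orthonormal basis and, via Corollary~\ref{cor:additive} and \citet{debreu}, an additive representation $u(z)=\sum_i u_{f_i}(\al_{f_i}(z))$ with $u_{f_i}(0)=0$, which fixes $u$ up to a global scale. The key is to show $u$ remains additive under an arbitrary change of orthonormal basis. Any two orthonormal bases are joined by finitely many planar (Givens) rotations together with coordinate sign changes, so it suffices to treat a rotation confined to $\spn\{g_1,g_2\}$ fixing the other axes. Restricting $u$ to $V=\spn\{g_1,g_2,g_3\}$ yields an additive representation of $\succeq|_V$ there, since the remaining components contribute $0$. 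If $\succeq|_V$ has no nontrivial inessential subspace, Lemma~\ref{lem:rotation} applied inside $V$ (rotating $\{g_1,g_2\}$ while holding $g_3$) returns the \emph{same} $u$ on $V$ written in the rotated components, those vanishing at $0$; because the rotated directions lie in $\spn\{g_1,g_2\}$, the resulting identity between old and new planar components depends only on the $\spn\{g_1,g_2\}$-part of a vector and so extends verbatim from $V$ to all of $\Re^n$. If instead $\succeq|_V$ has a nontrivial inessential subspace, it is linear on $V$ by Proposition~\ref{prop:newsubsp}, hence additive in every basis of $V$, and the same conclusion follows. Iterating over the rotation factors shows $u$ is additive in every orthonormal basis, with component $t\mapsto u(tf)$ along $f$. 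Orthogonal additivity is then immediate: given $x\perp y$, extend $\{x/\norm{x},y/\norm{y}\}$ to an orthonormal basis and evaluate the additive formula at $x+y$, $x$, $y$ (off-axis components are $0$ and $u(0)=0$), giving $u(x+y)=u(x)+u(y)$.

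Stage~2 (solving the equation). Write $u=q+\ell$ with $q(x)=\tfrac12(u(x)+u(-x))$ even and $\ell(x)=\tfrac12(u(x)-u(-x))$ odd; both are continuous, vanish at $0$, and inherit orthogonal additivity since $x\perp y$ forces $-x\perp-y$. For unit $f$ set $q_f(t)=q(tf)$, $\ell_f(t)=\ell(tf)$. In a plane $\spn\{a,b\}$ with $a\perp b$, writing $a(\ta)=\cos\ta\,a+\sin\ta\,b$, I decompose $ta$ along the rotated orthonormal pair $\{a(\ta),a(\ta+\pi/2)\}$ and expand each piece by orthogonal additivity. For the odd part the cross terms cancel, leaving $\ell_a(t)=\ell_a(t\cos^2\ta)+\ell_a(t\sin^2\ta)$; with $\la=\cos^2\ta$ this is Cauchy's equation on $[0,\infty)$, so continuity makes $\ell_a$ linear and hence $\ell(x)=v\cdot x$. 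For the even part the same computation gives $q_a(t)=q_a(t\cos^2\ta)+q_a(t\sin^2\ta)+2q_b(t\sin\ta\cos\ta)$; evaluating at $\ta=\pi/4$ and combining with the symmetric identity in $b$ forces $q_a=q_b$ whenever $a\perp b$. Since $n\ge3$, any two unit vectors share a common orthogonal mate, so all $q_f$ agree with one even function and $q$ is radial, $q(x)=Q(\norm{x})$. Orthogonal additivity on a scaled orthonormal pair then yields $Q(\sqrt{r^2+s^2})=Q(r)+Q(s)$, Cauchy's equation in the variable $r^2$; continuity gives $Q(r)=cr^2$, so $q(x)=c\norm{x}^2$ and $u(x)=c\norm{x}^2+v\cdot x$.

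I expect the main obstacle to be Stage~1: manufacturing one representation valid for all bases at once. Debreu supplies only a per-basis additive representation, a priori with its own cardinal scale, and the substance of the argument is reconciling these scales through Lemma~\ref{lem:rotation}—in particular checking that the planar rotation identity lifts from the auxiliary three-dimensional subspace to all of $\Re^n$, and dispatching the case in which a three-dimensional restriction degenerates to a linear preference. Once a single orthogonally additive $u$ is secured, Stage~2 is a self-contained functional-equations computation whose only real trick is the $\ta=\pi/4$ evaluation that collapses the even components to a radial function.
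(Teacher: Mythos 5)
Your proposal is correct in substance, and it genuinely departs from the paper's proof in two places. The paper works first in dimension three: it establishes orthogonal additivity of a representation (Proposition~\ref{prop:orthoadd}) and then \emph{cites} Theorem 1 of \citet{sundaresan} to solve the orthogonal Cauchy equation; for general $n$ it does not globalize the rotation argument, but instead takes the Debreu representation $U=\sum_i w_i$, applies the $n=3$ result on each coordinate triple $\{i,j,k\}$, and reconciles the triple-wise quadratic forms with the $w_i$ via the uniqueness theorem for additive conjoint measurement (\citet{FOM}, Theorem 2), a cross-triple consistency argument forcing a common $c$, common $v_h$, and common scale. Your Stage~1 instead upgrades the single Debreu representation to one additive in \emph{every} orthonormal basis, by factoring a change of basis into Givens rotations and applying Lemma~\ref{lem:rotation} inside three-dimensional subspaces; your observation that the planar identity produced by the lemma depends only on the $\spn\{g_1,g_2\}$-component of a vector, and therefore lifts from $V$ to all of $\Re^n$, is exactly the right point and is correct. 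This yields orthogonal additivity on $\Re^n$ in one stroke and lets you skip the triple-patching entirely. Your Stage~2 then replaces the citation of Sundaresan by a self-contained solution of the functional equation (even/odd split, cancellation of cross terms for the odd part, the $\ta=\pi/4$ evaluation that makes the even part radial); the computations check out, so in effect you reprove the continuous case of Sundaresan's theorem. The paper's route is shorter because it outsources these two steps; yours is more elementary and self-contained.

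One step you should tighten: the case in which $\succeq|_V$ has a nontrivial inessential subspace. Saying that $\succeq|_V$ is then linear on $V$, ``hence additive in every basis of $V$, and the same conclusion follows,'' conflates a property of the \emph{preference} with what you actually need, namely that the \emph{specific function} $u|_V$ (whose components along the remaining basis directions must stay frozen) rewrites additively in the rotated basis with the $g_3$-component unchanged and the planar components vanishing at $0$; that requires an extra uniqueness argument, and the uniqueness theorem itself degenerates when the linear functional has fewer than two essential coordinates in $V$. The clean fix is to show the case is vacuous: by Proposition~\ref{prop:newperpd}, any direction $d$ that is inessential for $\succeq|_V$ is inessential globally, since for $x\in\Re^n$ one can write $x=z+q$ with $z\in V$ and $q\perp V$, so that $z+d\sim z$ and $q\perp d$ give $x+d\sim x$; a nontrivial inessential subspace of $\succeq|_V$ would thus contradict the hypothesis of the proposition. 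It is worth noting that the paper's own proof silently relies on the same inheritance when it applies the $n=3$ result to coordinate triples, so this small lemma deserves to be stated under either approach.
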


\begin{proof}
For $n=3$ we have shown that there exists a utility representation that satisfies $u(x+y) = u(x) + u(y)$ for any  $x,y\in\Re^n$ with $x \perp y$. Then, by Theorem 1 of \citet{sundaresan},  $u(x) = c x\cdot x + v \cdot x$ for some $c\in\Re$ and $v\in\Re^n$.

So consider $n \geq 3$. By Corollary~\ref{cor:additive} and Theorem 3 in \citet{debreu}, and since there are no inessential non-trivial subspaces, there exists a utility representation 
\[U(x) = \sum_i w_i(x_i)\] of $\succeq$. 

By the preceding argument, for any subset $\{i,j,k\}\subseteq\{1,\ldots,n\}$ of cardinality 3, the representation restricted to $\Re^{i,j,k}$ can be chosen to be of the form \[u_{\{i,j,k\}}(x_{\{i,j,k\}})=c^{\{i,j,k\}}(x_i^2+x_j^2 + x_k^2) +  v^{\{i,j,k\}} \cdot (x_i, x_j, x_k).\] 

Then for any $\{i,j,k\}\subseteq\{1,\ldots,n\}$ of cardinality 3 we have two additive representations on $\Re^{\{i,j,k\}}$ : $w_i(x_i)+w_j(x_j)+w_k(x_j)$ and 
$\sum_{h\in \{i,j,k\}} c^{\{i,j,k\}}(x_h^2) + v^{\{i,j,k\}}_{h} x_h$. 
By Theorem 2 in  Chapter 6.2.4 of \cite{FOM}, there exists  $\al^{\{i,j,k\}}$ and $\beta^{\{i,j,k\}}>0$ with  \[\beta^{\{i,j,k\}} w_h(x_h) + \al^{\{i,j,k\}} = c^{\{i,j,k\}}(x_h^2) + v^{\{i,j,k\}}_{h} x_h\] for all $x_h$. This is true for all $x_h$ iff there is $\beta,$ $c_h$ and $v_h$ with 
$\beta^{\{i,j,k\}} = \beta>0$, $c^{\{i,j,k\}} = c$, $v^{\{i,j,k\}}_{h} = v_h$
and $\al^{\{i,j,k\}} = 0$.\footnote{Normalize $\beta^{\{i,j,k\}}=1$. Then, for $k\neq l$ we have 
$w_i(x_i) + \al^{\{i,j,k\}} = c^{\{i,j,k\}}(x_i^2) + v^{\{i,j,k\}}_{i} x_i$ and 
$w_i(x_i) + \al^{\{i,j,l\}} = c^{\{i,j,l\}}(x_i^2) + v^{\{i,j,l\}}_{i} x_i$. Hence,
$\al^{\{i,j,k\}} - \al^{\{i,j,l\}}  = (c^{\{i,j,k\}} - c^{\{i,j,l\}}) + (v^{\{i,j,k\}}_{i} - v^{\{i,j,l\}}_{i}) x_i$. This can only hold for all $x_i\in\Re$ if 
$\al^{\{i,j,k\}} - \al^{\{i,j,l\}} = c^{\{i,j,k\}} - c^{\{i,j,l\}} = v^{\{i,j,k\}}_{i} - v^{\{i,j,l\}}_{i} = 0$.}
Hence, $\beta w_h(x_h) = c x^2_h + v_h x_h$.

\end{proof}

\section{Proof of Theorem~\ref{thm:cardinal}}\label{sec:proofcardinal}

We prove sufficiency. So let $U$ be as in the statement of the theorem.
Let \[
f(x) = \frac{1}{2}\left[U(z+x) - U(z) \right] + \frac{1}{2}\left[U(z - x) - U(z) \right]
 \] and define $g(x) = U(x) - f(x)$. The following lemmas show sufficiency.

 \begin{lemma}\label{lem:one}
\[
f(x+y) + f(x-y) = 2f(x) + 2f(y) \]
    \end{lemma}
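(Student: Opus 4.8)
The plan is to convert status quo independence into a single pointwise identity and then feed it the right arguments. By definition $f(x)=\tfrac12[U(z+x)-U(z)]+\tfrac12[U(z-x)-U(z)]=\tfrac12[U(z+x)+U(z-x)]-U(z)$, and status quo independence says this value does not depend on $z$. Evaluating at $z=0$ and using $U(0)=0$ gives $2f(x)=U(x)+U(-x)$, and, more usefully, the identity
\[ U(w+x)+U(w-x)-2U(w) = U(x)+U(-x) = 2f(x) \]
valid for every $w,x$; call it $(\star)$. This is just status quo independence rewritten, and it already encodes the center-independence that makes $f$ well defined.

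Next I would apply $(\star)$ twice, keeping the increment equal to $x$ but taking the status quo to be $w=y$ and then $w=-y$:
\[ U(y+x)+U(y-x)-2U(y) = 2f(x), \qquad U(-y+x)+U(-y-x)-2U(-y) = 2f(x). \]
Adding these two equations and using $U(y)+U(-y)=2f(y)$ to rewrite the term $2U(y)+2U(-y)=4f(y)$ yields
\[ U(x+y)+U(-x-y)+U(x-y)+U(-x+y) = 4f(x)+4f(y). \]
Finally, regrouping the left-hand side as $[U(x+y)+U(-(x+y))]+[U(x-y)+U(-(x-y))]$ and invoking $(\star)$ (equivalently the definition of $f$ at base point $0$) collapses it to $2f(x+y)+2f(x-y)$. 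Dividing by $2$ gives the parallelogram law $f(x+y)+f(x-y)=2f(x)+2f(y)$, which is the assertion of the lemma.

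I do not expect a genuine obstacle here: once $(\star)$ is isolated, the lemma is a short symmetric computation. The only point requiring care is the bookkeeping, namely keeping straight which argument plays the role of the ``status quo'' $w$ and which the increment in each application of $(\star)$, and exploiting center-independence to pass freely among the base points $0$, $y$, and $-y$. Note that eventual linearity plays no role in this step; only status quo independence and the normalization $U(0)=0$ are used.
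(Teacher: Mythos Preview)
Your proof is correct and in fact cleaner than the paper's. The key observation you isolate, $(\star)$: $U(w+x)+U(w-x)-2U(w)=2f(x)$ for all $w$, is just status quo independence rewritten, and your two applications at $w=y$ and $w=-y$ followed by the regrouping are all valid. The computation is airtight.

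The paper takes a genuinely different route. It expands $A=f(x-y)-f(x)-f(y)$ using three different base points $q=z$, $q'=z+y$, $q''=z-x$, manipulates the resulting expression into $-f(x+y)+f(x)+f(y)$ plus a residual term of the form
\[
\tfrac12\bigl[U(z+y)+U(z-x)-U(z+x)-U(z-y)\bigr]+\tfrac12\bigl[U(z+(x-y))-U(z-(x-y))\bigr],
\]
and then invokes \emph{Eventual Linearity} to choose $z$ so that this residual vanishes. So the paper's argument for Lemma~\ref{lem:one} actually consumes both axioms, whereas yours uses only status quo independence and the normalization $U(0)=0$. Your remark that ``eventual linearity plays no role in this step'' is therefore not just a side comment but a genuine strengthening: it shows the parallelogram law for $f$ follows from status quo independence alone. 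The trade-off is that the paper's longer computation sets up some of the algebra reused in Lemma~\ref{lem:two}; your approach would require reorganizing that lemma slightly, but the net effect is still a simplification.
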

    
\begin{proof}
  \begin{align*}
A=    f(x-y) - f(x) - f(y)  
=& \frac{1}{2} U(q+ (x-y)) + \frac{1}{2} U(q- (x-y)) - U(q)  \\
&-\frac{1}{2} U(q'+ x) - \frac{1}{2} U(q' - x) + U(q')  \\
&-\frac{1}{2} U(q''+ y) - \frac{1}{2} U(q'' - y) + U(q'')  \\
=& \frac{1}{2} U(z+ (x-y)) + \frac{1}{2} U(z- (x-y)) - U(z)  \\
&-\frac{1}{2} U(z+ (x+y)) - \frac{1}{2} U(z - (x-y)) + U(z+y)  \\
&-\frac{1}{2} U(z-(x-y)) - \frac{1}{2} U(z-(x + y)) + U(z-x)  \\
  \end{align*}

Where the first equality is by definition of $f$, with arbitrary $q,q',q''\in\Re^n$. The second uses $q=z$, $q'=z+y$ and $q''=z-x$. 

Then we have that 
\[\begin{split}
A  =  -f(x+y) - 2 U(z) + U(z+y)  + U(z-x)  \\
+ \frac{1}{2} \left[U(z+ (x-y)) +  U(z- (x-y)) -  U(z - (x-y)) -
  U(z-(x-y)) \right]\\
=  -f(x+y) - 2 U(z) + U(z+y)  + U(z-x)  \\
+ \frac{1}{2} \left[U(z+ (x-y)) - U(z-(x-y)) \right] \\
 =  -f(x+y) + f(x) + f(y) 
-\frac{1}{2}\left[U(z+x)+U(z-x) + U(z+y) + U(z-y)\right]  \\
+U(z+y)  + U(z-x)  
+ \frac{1}{2} \left[U(z+ (x-y)) - U(z-(x-y)) \right] \\
 =  -f(x+y) + f(x) + f(y) 
+\frac{1}{2}\left[U(z+y) + U(z-x)  - U(z+x) - U(z-y)\right] \\
 + \frac{1}{2} \left[U(z+ (x-y)) - U(z-(x-y)) \right]\\
\end{split}\]

Let $y'=-y$. Then by Eventual Linearity we can set $z$ such that 
\[
U(z-y') + U(z-x)  - U(z+x) - U(z+y') + U(z+ (x+y')) - U(z-(x+y'))
=0.\]
Thus \[
 f(x-y) - f(x) - f(y)   = A = -f(x+y) + f(x) + f(y) .\]
\end{proof}

The function $f$ is continuous, and uniquely identified from $U$. Then Lemma~\ref{lem:one} and Proposition 4 of Chapter 11 of \cite{aczel1989functional} implies that there is a unique function $S:\Re^{2n}\rightarrow \Re$ such that $S$ is symmetric, bi-linear, and $f(x)=S(x,x)$.

\begin{lemma}\label{lem:two}
\[
g(x+y) = g(x) + g(y)
\]  
\end{lemma}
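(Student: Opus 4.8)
The plan is to show that $g$ satisfies Jensen's functional equation and then upgrade this to Cauchy additivity, with the point being that all of the second-order content of $U$ has been loaded into $f$, leaving an affine equation for the odd part $g$. First I would record the trivial fact that $f(0)=0$ directly from its definition (both bracketed terms vanish when the displacement is $0$), so that $g(0)=U(0)-f(0)=0$.

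Next I would turn status quo independence into an explicit pointwise identity. Since the expression defining $f(a)$ is, by status quo independence, the same for every status quo point, evaluating it at an arbitrary $w$ and rearranging gives
\[ U(w+a) + U(w-a) = 2U(w) + 2f(a) \]
for all $w,a\in\Re^n$. In parallel, Lemma~\ref{lem:one}, applied with the relabeling $x\mapsto w$, $y\mapsto a$, reads
\[ f(w+a) + f(w-a) = 2f(w) + 2f(a). \]

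The crux of the argument is to subtract the second identity from the first. Because $g=U-f$, the $2f(a)$ terms cancel and, on the left, the $U$- and $f$-values reassemble into $g$-values, leaving
\[ g(w+a) + g(w-a) = 2g(w) \]
for all $w,a$. This is exactly Jensen's equation for $g$. I expect the main (mild) obstacle to be purely this bookkeeping: one must check that the quadratic parallelogram term $2f(a)$ produced by status quo independence is identical to the one produced by Lemma~\ref{lem:one}, which is precisely what forces the cancellation and is the whole reason the odd part comes out affine.

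Finally I would solve the Jensen equation using $g(0)=0$. Taking $a=w$ gives $g(2w)=2g(w)$, equivalently $g(t/2)=\tfrac12 g(t)$ for every $t$. Then, for arbitrary $x,y$, applying the Jensen identity with $w=\tfrac12(x+y)$ and $a=\tfrac12(x-y)$, so that $w+a=x$ and $w-a=y$, yields $g(x)+g(y)=2g\!\left(\tfrac12(x+y)\right)=g(x+y)$, which is the claim. Note that continuity is not actually needed for additivity itself; it will only be invoked afterward to pass from this Cauchy additivity to genuine linearity of $g$.
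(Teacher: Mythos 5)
Your proof is correct, and it reaches the conclusion by a cleaner route than the paper's own argument, though from the same two ingredients: status quo independence and Lemma~\ref{lem:one}. The paper attacks the additivity identity head on: it writes $g(x+y)-g(x)-g(y)$ in terms of $U$ and $f$, uses Lemma~\ref{lem:one} to replace $-(f(x+y)-f(x)-f(y))$ by $f(x-y)-f(x)-f(y)$, expands $f$ via its definition with the specific status quo choices $z'=y$ and $z''=x$, and verifies that all terms cancel after a final appeal to status quo independence and $U(0)=0$. You instead convert status quo independence into the pointwise identity $U(w+a)+U(w-a)=2U(w)+2f(a)$, subtract the parallelogram law $f(w+a)+f(w-a)=2f(w)+2f(a)$ of Lemma~\ref{lem:one}, and obtain the Jensen equation $g(w+a)+g(w-a)=2g(w)$, which you then solve using $g(0)=0$ (doubling, followed by the substitution $w=\tfrac12(x+y)$, $a=\tfrac12(x-y)$). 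What your route buys: it isolates exactly why the argument works (the quadratic term $2f(a)$ appears identically in both identities and cancels), it avoids the ad hoc choice of status quo points, and it makes explicit a fact left implicit in the paper, namely that additivity of $g$ needs no continuity at all—continuity enters only in the subsequent step where Cauchy additivity is upgraded to linearity. The paper's computation, by contrast, is self-contained within a single chain of substitutions and does not require recognizing or solving a named functional equation.
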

\begin{proof} First note that  $g(x+y) -g(x) -g(y)  = U(x+y) - U(x) - U(y) -( f(x+y) - f(x) - f(y) )$, and that Lemma~\ref{lem:one} implies that for any choice of $z,z',z''$:
\begin{align*}
 - (f(x+y) - f(x) - f(y)) & =  f(x-y) - f(x) - f(y) \\
& = \frac{1}{2}\left[U(z+x-y) - U(z) \right] + \frac{1}{2}\left[U(z - (x-y) ) - U(z) \right] \\
& - \left( \frac{1}{2}\left[U(z'+x) - U(z') \right] + \frac{1}{2}\left[U(z' - x  ) - U(z') \right] \right) \\
& - \left( \frac{1}{2}\left[U(z''+y) - U(z'') \right] + \frac{1}{2}\left[U(z'' - y  ) - U(z'') \right] \right).
\end{align*}

Therefore:
\begin{align*}
g(x+y) - g(x) - g(y) & =
U(x+y) - U(x) - U(y)   + \frac{1}{2}\left[U(z+x-y) - U(z) \right] \\
& + \frac{1}{2}\left[U(z - x
                       + y) - U(z) \right] \\
& - \frac{1}{2}\left[U(z'+x) - U(z') \right] - \frac{1}{2}\left[U(z' -
  x) - U(z') \right] \\
& - \frac{1}{2}\left[U(z''+y) - U(z'') \right] - \frac{1}{2}\left[U(z''  - y) - U(z'') \right].
\end{align*}

In particular, for $z'=y$ and $z''=x$, and using that $U(0)=0$, we
obtain that 
\begin{align*}
g(x+y) - g(x) - g(y) = & 
\frac{1}{2}\left[U(z+x-y) - U(z) \right] + \frac{1}{2}\left[U(z - x +
  y) - U(z) \right] \\
& - \frac{1}{2}\left[U(y - x)  - U(0) \right] - \frac{1}{2}\left[U(x - y) - U(0) \right] \\
= & 0, \end{align*} by status quo independence.
\end{proof}

Note that $g$ is continuous because $U$ is continuous. Then~\ref{lem:two} implies that $g$ is a linear function by Corollary 2 of Chapter 4 of \cite{aczel1989functional}.

For necessity: Status-quo independence is a simple calculation. Eventual Linearity is established by the following calculation. 
\begin{align*}
U(w+(x+y))-U(w-(x+y)) & = g(w+(x+y))-g(w-(x+y))  \\  & + S(w+(x+y),w+(x+y)) \\
& - S(w-(x+y),w-(x+y)) \\
& = 2g(x) + 2g(y) + 2S(w,x+y) + S(x+y,x+y) \\ & + 2S(w,x+y) - S(x+y,x+y) \\
& =2g(x) + 2g(y) + 4S(w,x) + 4S(w,y) \\ %+ S(x,x) \\ & + 2S(x,y) + S(y,y) \\
& = [g(w) + g(x) + S(w,w) + 2S(w,x) +S(x,x)] \\ & -[g(w)+g(-x) + S(w,w) + 2S(w,-x) + S(-x,-x)] \\
& + [g(y) + g(w) + S(w,w) + 2S(w,y) + S(y,y)] \\ &- [g(-y) + g(w) + S(w,w) + 2S(w,-y) + S(-y,-y)] \\
& = U(w+x) - U(w-x) + U(w+y) - U(w-y)
\end{align*}

%Observe that $x+z =a+c$ and $y+w=b+c$ for some $c$.  So $(x+z)\cdot (x+z) = (a+c)\cdot (a+c)$ and $(y+w)\cdot (y+w)=(b+c)\cdot (b+c)$.  Hence $(x+z)\cdot (x+z) - (y+w)\cdot (y+w) = (a+c)\cdot (a+c) -(b+c)\cdot (b+c)$.  

\section{Proof of Proposition~\ref{prop:homoth}}\label{sec:proofprophomoth}

We first establish the result for $w = 0$, so suppose that $\|x\|=\|y\|$, where $x\succeq y$.

We first establish the result for positive integer $\beta$. The proof proceeds by induction. Let $a\in\Re^n$ for which $a \perp x$ and $a \perp y$, further $\|a\|=\|x\|=\|y\|$. Such $a$ exists because $n\geq 3$.

By SOIO, it follows that $2a + x \succeq 2a + y$.  Further, $(x-a)\perp (x+a)$ and $(y-a)\perp (y+a)$.  Since $a + (x-a) \succeq a + (y-a)$ and $a + (x+a) \succeq a + (y+a)$, SOIO implies that $a + (x-a)+(x+a) \succeq a + (y-a)+(y+a)$, or $a +2x \succeq a + 2y$.  By SOIO, if $2y \succ 2x$, we would have $a + 2y \succ a + 2x$, a contradiction.  So, in fact $2x \succeq 2y$.

Suppose now that $x\succeq y$, and that we have shown $kx \succeq ky$ for $k\in\Na$.  We claim that $(k+1) x \succeq (k+1)y$.  By $(k+1)a\perp kx$, $(k+1)a\perp ky$ and SOIO, $(k+1)a + kx \succeq (k+1)a + ky$ (or $a + (kx +ka) \succeq a + (ky + ka)$). Moreover, $a + (x-a) \succeq a + (y-a)$.  Observe that $(kx+ka) \perp (x-a)$ and $(ky+ka) \perp (y-a)$.  Consequently, by SOIO, $a + (x-a) + k(x+a)\succeq a + (y-a) + k(y+a)$, or $ka + (k+1) x \succeq ka + (k+1)y$.  Again it must follow that $(k+1) x \succeq (k+1)y$.

By induction, $kx\succeq ky$ for all $k\in\Na$ with $k>0$. Note that the same argument shows that if $x\succ y$ then $kx \succ ky$. 

Now let $q>0$ be a rational number, $q=k/l$ with $k,l\in\Na$. Then it must hold that $qx\succeq qy$, as $qy\succ qx$ would imply that $lqy = ky \succ kx = lqx$ by the first step and the fact that $\norm{qx}=\norm{qy}$. 

Finally, by continuity of $\succeq$ we obtain that $\beta x\succeq \beta y$ for all real $\beta>0$. This proves the result for $w=0$. 

To see that the result holds for arbitrary $w$, it is enough to observe that the ranking $x \succeq_w y$ iff $(x+w) \succeq (y+w)$ satisfies SOIO and apply the previous argument.  

\bibliographystyle{ecta}
\bibliography{orthogonal}

\end{document}